\newtheorem{definition}{Definition} 
\newtheorem{theorem}{Theorem}
\newtheorem{lemma}{Lemma}[section]
\newtheorem{proposition}[theorem]{Proposition}
\newtheorem{corollary}[theorem]{Corollary}
\begin{document}

\title{The Quantum Agreement Theorem\footnote{We thank Pierfrancesco La Mura for very important inspiration for this project.  We thank Julio de Vicente, Aram Harrow, and David P\'erez Garc\'ia for useful comments during a MathQI seminar at UCM Madrid, and Mohammad Mehboudi for valuable discussions.  M.G.D.~and G.S.~acknowledge financial support from the following grants: PID2020-113523GB-I00 and PID2023-146758NB-I00 funded by MICIU/AEI/10.13039/501100011033.  A.B.~acknowledges financial support from NYU Stern School of Business, NYU Shanghai, and J.P. Valles.  ChatGPT, Grok, and MS Copilot were used to help in searching for examples and in editing text.}}

\author
{Mar\'ia Garc\'ia D\'iaz \footnote{Departamento de Matem\'atica Aplicada a la Ingenier\'ia Industrial,
Universidad Polit\'ecnica de Madrid, 28006 Madrid, Spain, maria.garcia.diaz@upm.es}
\and{Adam Brandenburger \footnote{Stern School of Business, Tandon School of Engineering, NYU Shanghai, New York University, New York, NY 10012, U.S.A., adam.brandenburger@stern.nyu.edu}}
\and{Giannicola Scarpa \footnote{Escuela T\'ecnica Superior de Ingenier\'ia de Sistemas Inform\'aticos, Universidad Polit\'ecnica de Madrid, 28031 Madrid, Spain, g.scarpa@upm.es}}
    }
\date{Version 03/15/26}\maketitle

\thispagestyle{empty}

\begin{abstract}
We formulate and prove an Agreement Theorem for quantum mechanics (QM), describing when two agents, represented by separate laboratories, can or cannot maintain differing probability estimates of a shared quantum property of interest.  Building on the classical framework (Aumann, 1976), we define the modality of ``common certainty" through a hierarchy of certainty operators acting on each agent's Hilbert space.  In the commuting case -- when all measurements and event projectors commute -- common certainty leads to equality of the agents' conditional probabilities, recovering a QM analog of the classical theorem.  By contrast, when non-commuting operators are allowed, the certainty recursion can stabilize with different probabilities.  This yields common certainty of disagreement (CCD) as a distinctive QM phenomenon.  We show that agreement will nevertheless re-emerge if measurement outcomes are recorded in a classical register.  We also establish an impossibility result stating that QM forbids a scenario where one agent is certain that a property of interest occurs, and is also certain that the other agent is certain that the property does not occur.  In this sense, QM admits non-classical disagreement, but disagreement is still bounded in a disciplined way.  We argue that our analysis offers a rigorous approach to the longstanding issue of how to understand intersubjectivity across agents in QM.
\end{abstract}
\vspace{0.2in}

\section{Introduction}
Soon after the formalism of quantum mechanics (QM) was in place, the question of the role of the observer arose.  Bohr argued that the outcome of an experiment, and even which physical quantities can meaningfully be said to possess values, naturally depend on the experimental context, while Einstein famously objected that such ``observer-created'' facts conflict with scientific realism (Held, 1998).  Wigner (1961) sharpened the issue with his Friend paradox.  If Wigner treats his colleague together with her laboratory as a quantum system in superposition, he must assign an entangled state to a situation in which the colleague herself has already witnessed a definite outcome.  This thought experiment suggests that observer dependence might even yield intersubjective contradiction.

The debate has continued in modern form.  Elaborating on Wigner's Friend, Frauchiger and Renner (2018) presented a scenario exhibiting an especially strong form of intersubjective disagreement, raising questions about the consistency of QM.  (See also Brukner, 2018; Nurgalieva and del Rio, 2018; Proietti et al., 2019).  Relational quantum mechanics (Rovelli, 1996; Di Biagio and Rovelli, 2021) holds that physical states are always defined relative to another system -- usually, an observer.  The question of intersubjective agreement has been studied under this approach and a specific postulate, called ``cross-perspective links"  (Adlam and Rovelli, 2023), put forward that ensures agreement about past quantum events.  In quantum Bayesianism (QBism), full intersubjectivity is allowed in any measurement setting (Fuchs and Schack, 2013; Fuchs, Mermin, and Schack, 2014).  QBism, too, is interested in how agents might actively create conditions for agreement among them (Schack, 2024), such as when Alice first performs a measurement and can then predict with probability one the outcome that Bob will observe if he performs a measurement on her post-measurement state.

Across these approaches, a basic question persists: \textit{When, if ever, can two agents sustain differing probability assignments about the same physical property of a quantum system, when they engage in reasoning about each other's reasoning about these probability assignments?}

Independently, the epistemic approach to game theory has developed a rigorous formal language for describing what agents believe, what they believe about one another's beliefs, and so on to arbitrarily many levels  (Aumann, 1976; Aumann and Brandenburger, 1995; Dekel and Siniscalchi, 2015).  Within this framework, the classical Agreement Theorem (Aumann, 1976) shows that if two agents share a common prior and it is common knowledge (or common certainty) what probabilities they assign to an event, then those probabilities must agree.  In slogan form, rational agents cannot ``agree to disagree.'' This result has been used, for example, to analyze Nash equilibrium (Aumann and Brandenburger, 1995; Pacuit and Roy, 2025) and market efficiency (Milgrom and Stokey, 1982; Sebenius and Geanakoplos, 1983; Gizatulina and Hellman, 2019).

In this paper, we bring these traditions together.  We formulate and prove several theorems on agreement and disagreement when Alice and Bob are observers of a quantum system.  Our analysis thereby operates at the intersection of the logic of belief and the physics of measurement.  Concretely, we consider a setting in which Alice and Bob perform local measurements in their respective laboratories on subsystems of a shared quantum state and are both interested in estimating a property of a third subsystem.  Each agent updates their own probability for the property of interest using the Born-L\"uders rule.  We then formalize a quantum version of ``common certainty," under which each agent is certain of the other agent's probability estimate, is certain the other agent is certain of the first agent's estimate, and so on ad infinitum.  This is the epistemic condition we use to formalize intersubjectivity in the quantum context.  Our first two results establish a clear structural distinction between commuting and non-commuting regimes.

\begin{quote}
\textbf{Commutative Agreement Theorem:} In the commuting regime where Alice's and Bob's measurements commute with one another and with the property of interest, if Alice and Bob have common certainty of their respective probability estimates of that property, then their estimates must agree.
\end{quote}

This result provides a direct quantum analog of the classical Agreement Theorem and shows that classical intersubjective agreement emerges naturally in QM whenever the relevant operators are mutually compatible.

\begin{quote}
\textbf{Non-Commutative Disagreement Theorem:} In a non-commuting regime where an agent's measurement does not commute with the property of interest, it is possible that Alice and Bob have common certainty of their respective probability estimates, yet those estimates differ.
\end{quote}

To establish this result, we produce a qutrit-qubit-qubit system in a particular entanglement state where Alice's measurement does not commute with the property of interest, and it is common certainty that Alice assigns a probability $1/2$ to the property of interest while Bob assigns a probability of $1$ to this same property.  We call this a situation of common certainty of disagreement (CCD).  It is evidently a distinctive quantum phenomenon that cannot arise classically.  Our third result establishes a limit on such epistemic conflict in QM.

\begin{quote}
\textbf{$0\operatorname{-}1$ Impossibility Theorem:} There is no positive-probability state in a (finite-dimensional) quantum system where Alice is certain of a property and is also certain that Bob is certain of the negation of that property.
\end{quote}

This result validates that our epistemic analysis has physical bite -- there are epistemic states ruled out by QM.  It also distinguishes our approach from Frauchiger and Renner (2018), who, in a challenge to the internal consistency of QM, produce the ``paradoxical" $0\operatorname{-}1$ scenario by chaining together statements of Alice's certainty across mutually incompatible measurement branches.  We show that this scenario cannot arise at a single state.  However, we do produce a superquantum no-signaling box (Popescu and Rohrlich, 1994) that exhibits $0\operatorname{-}1$ disagreement.  (By Abramsky and Brandenburger, 2011, this example can also be cast in quasi-classical terms using signed probabilities, as we show.)  We see this phenomenon of extreme disagreement as adding to the well-known list of implausible superquantum behaviors, such as trivialized communication complexity (van Dam, 2005), violations of information causality (Pawlowksi et al., 2009), and signaling in the classical limit (Rohrlich, 2014).

Beyond these main structural results, we show how classical agreement can be recovered even in a non-commuting regime through the physical act of recording measurement outcomes in a classical register.  This offers a new perspective on the classical Agreement Theorem, which can be understood as emergent from quantum theory via classical recording.  We also establish two robustness results for our Commutative Agreement Theorem.  First, if Alice and Bob start with different initial states, as might happen because of noise, channel disturbance, or the presence of an eavesdropper, the difference between their probability estimates can be quantitatively bounded (cf.~Hellman, 2013).  Second, if certainty is relaxed to approximate certainty, reflecting finite statistics, detector inefficiencies, or coarse-graining, then disagreement is again bounded (cf.~Monderer and Samet, 1989).  These results show that agreement in the commuting regime is stable under realistic perturbations.

Taken together, our findings clarify the structure of intersubjectivity in QM.  In the commuting regime, classical-style agreement prevails.  In the non-commuting regime, CCD becomes possible.  We therefore identify a clear boundary between classical and commuting intersubjectivity (ruled out under common certainty) and non-commuting intersubjectivity (allowed under common certainty).  This second result, in particular, challenges the notion that intersubjective disagreement should never be tolerated in a physical theory.  We see this finding as broadly in line with the kind of perspective-dependent facts that relational QM (Rovelli, 1996) and QBism (Fuchs and Schack, 2013) were formulated to accommodate.  Far from being a pathology, CCD appears as a natural consequence of non-commuting measurements and provides direct support for interpretations that treat QM probabilities as agent-relative.  In employing formal epistemics, our work may suggest new tools for further development of these interpretations.

We next present our framework and establish our three main results.  We then examine restoration of agreement via a classical register and bounds on disagreement.  We conclude with a discussion of some interpretational aspects, related work, and possible future directions.

\section*{Commutative Agreement Theorem}
Consider a setting where two agents, Alice and Bob, have laboratories labeled $A$ and $B$, respectively.  There is a quantum system made up of one subsystem in each of their labs and a third subsystem in a third lab, labeled $C$, which is not accessible to the agents.  Alice and Bob each perform a measurement in their lab, and they are interested in estimating a property of the subsystem in the third lab.  Figure \ref{fig:3part} depicts this setting. 

\begin{figure}[h] 
\centering
\includegraphics[scale=0.6]{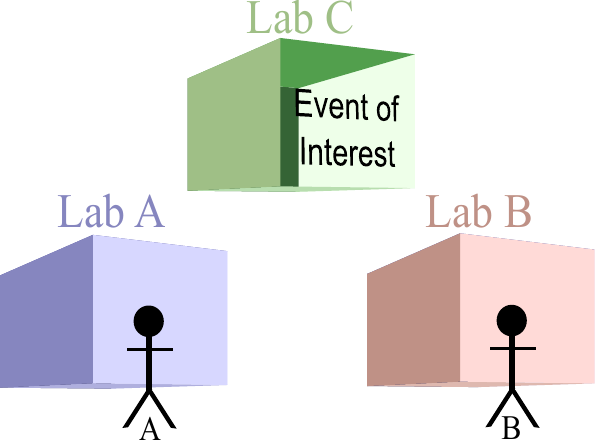}
\caption{Three-laboratory scenario}
\label{fig:3part}
\end{figure}

Formally, we have a tripartite finite-dimensional Hilbert space  ${\cal H} = {\cal H}_A \otimes {\cal H}_B \otimes {\cal H}_C$, where the first part is accessible to Alice and the second part to Bob, and the third part is inaccessible.  The quantum state is described by a density matrix $\rho\in \mathcal{B(H)}$, and this state is prior knowledge shared by Alice and Bob.

Alice and Bob perform projective measurements on their respective systems, in order to estimate the property of interest in the third system $\mathcal{H}_C$.  That is, for each agent $X$ (Alice or Bob), there is a projective measurement $\Pi_X = \{P_X^k\}_{k=1}^{\ell_X}$ with projectors $P_X^k\in {\cal B(H})$.  Each projector for Alice takes the form $P_A^i = \hat{P}_A^i \otimes \mathbb{I}_B \otimes \mathbb{I}_C$, for $i = 1, \dots, \ell_A$, and each projector for Bob takes the form $P_B^j = \mathbb{I}_A \otimes \hat{P}_B^j \otimes \mathbb{I}_C$, for $j = 1, \dots, \ell_B$, where each $\hat{P}^k_X \in {\cal B(H}_X)$, and $\mathbb{I}_X$ and $\mathbb{I}_C$ are the identity operators.  (We comment later on extension to a POVM framework.)  The property of interest $E$ in the inaccessible lab is modeled as a projector $P_E = \mathbb{I}_A \otimes \mathbb{I}_B \otimes \hat{P}_E \in {\cal B(H)}$ with $\hat{P}_E \in {\cal B(H}_C)$.  In this setting, each pair of projectors for Alice and Bob commute with each other:
\begin{equation} \label{eq:com1}
[P_A^i, P_B^j] =0 \,\, \text{for all} \,\, i, j,
\end{equation}
and each projector $P_X^k$ commutes with $P_E$:
\begin{equation} \label{eq:com2}
[P_X^k, P_E] = 0 \,\, \text{for all} \,\, k.
\end{equation}

Next, fix a projector $P_F \in \cal{B(H)}$.  The (conditional) probability that agent $X$ assigns to property $F$, given the outcome $P_X^k$ of their measurement, is given by the Born rule with a L\"uders update:
\begin{equation} \label{eq:born}
\operatorname{Prob}[P_F; P_X^k] = \operatorname{Tr}\Bigl(P_F \frac{P_X^k \, \rho \, P_X^k}{\operatorname{Tr}(P_X^k \, \rho)}\Bigr),
\end{equation}
whenever the denominator is nonzero.  If this probability is $1$, we say that agent $X$ is \emph{certain} of the property $F$ (Monderer and Samet, 1989). Now let $q$ be a given probability and consider the set of indices $k$ for which agent $X$ assigns probability $q$ to the property $F$, after observing the outcome $P_X^k$:
\begin{equation} \label{eq:index}
K_X(F; q) = \bigl\{k : \operatorname{Tr}(P_X^k \, \rho) > 0 \,\, \text{and} \,\, \operatorname{Prob}[P_F; P_X^k] = q\bigr\}.
\end{equation}
(We suppress the dependence on $\rho$.)  Next define the operator that agent $X$ assigns probability $q$:
\begin{equation} \label{eq:proj_sum}
Q_X(F; q) = \sum_{k\in K_X(F; \, q)} P_X^k,
\end{equation}
which, as a sum of mutually orthogonal projectors, is again a projector.  The probability that agent $Y$ assigns to $Q_X(F; q)$, given their outcome $P_Y^k$, is $\operatorname{Prob}[Q_X(F; q); P_Y^k]$.  If this probability is $1$, we say that agent $Y$ is certain that agent $X$ assigns probability $q$ to property $F$.  A particular instance is $q = 1$, in which case we say agent $Y$ is certain that agent $X$ is certain of property $F$.  We also write $C_X(F)$ for $Q_X(F; 1)$.

We are now ready to define common certainty.  Fix probabilities $q_A$ and $q_B$ for Alice and Bob, respectively.  Let:
\begin{align}
A_0 &= Q_A(E; q_A), \label{eq:init-a} \\
B_0 &= Q_B(E; q_B), \label{eq:init-b}
\end{align}
and recursively define higher-order certainty operators, for all $n \ge 0$:
\begin{align}
A_{n+1} &= A_n \, C_A(B_n), \label{eq:recur-a} \\
B_{n+1} &= B_n \, C_B(A_n). \label{eq:recur-b}
\end{align}
(See the Supplementary Material for details.)  By finiteness of the projective measurements $\Pi_X$, for $X = A, B$, there is a finite integer $N$ such that the recursion stabilizes: $A_n = A_N$ and $B_n = B_N$ for all $n \ge N$.  Writing $A_* = A_N$ and $B_* = B_N$, we call $C_* = A_* \, B_*$ the \emph{common certainty} projector.  It is the fixed point of the certainty recursion, capturing the infinite iteration ``Alice is certain that Bob is certain that Alice is certain $\dots$ that the property $E$ has probability $q_A$ (resp.~$q_B$)."  Since $A_*$ and $B_*$ commute,  we can also write $C_* = B_* \, A_*$, a fact we will use later.  If $q_A = q_B = 1$, we simply say that there is common certainty of $E$.  We now have the ingredients to define the epistemic condition of interest.
\vspace{0.1in}

\noindent \textbf{Definition:} \textit{Fix a property of interest $E$ and probability estimates $q_A$ and $q_B$ for Alice and Bob, respectively.  Then $q_A$ and $q_B$ are common certainty at state $\rho$ if $\operatorname{Tr}(C_* \, \rho) > 0$.} 
\vspace{0.1in}

This definition ensures the common-certainty recursion is non-vacuous by requiring that the operator $C_*$ has positive weight at the given state $\rho$.
\vspace{0.1in}

\noindent \textbf{Commutative Agreement Theorem:} \textit{If it is common certainty at state $\rho$ that Alice assigns $q_A$ to $E$ and Bob assigns $q_B$ to $E$, then $q_A = q_B$.}
\vspace{0.1in}

The proof has four main steps.  In the Supplementary Material, we first show that common certainty implies that Bob's certainty operator does not disturb Alice's post-measurement state.  That is, after projecting onto the subspace where Alice is certain, the further application of Bob's certainty operator leaves the state unchanged.  Formally:
\begin{equation} \label{eq:nondisturb}
\frac{B_* A_* \, \rho \, A_* B_*}{\operatorname{Tr}(B_* A_* \, \rho)} = \frac{A_* \, \rho \, A_*}{\operatorname{Tr}(A_* \, \rho)}.
\end{equation}
Second, this invariance ensures that the probability of the property $E$ is the same whether evaluated before or after Bob's certainty operator is applied:
\begin{equation} \label{eq:qA_expression}
\operatorname{Tr}\Bigl(P_E \frac{A_* \, \rho \, A_*}{\operatorname{Tr}(A_* \, \rho)}\Bigr) = \operatorname{Tr}\Bigl(P_E \frac{B_* A_* \, \rho \, A_* B_*}{\operatorname{Tr}(B_* A_* \, \rho)}\Bigr).
\end{equation}
Third, by symmetry, the same reasoning holds with Alice and Bob interchanged:
\begin{equation} \label{eq:qB_expression}
\operatorname{Tr}\Bigl(P_E \frac{B_* \, \rho \, B_*}{\operatorname{Tr}(B_* \, \rho)}\Bigr) = \operatorname{Tr}\Bigl(P_E \frac{A_* B_* \, \rho \, B_* A_*}{\operatorname{Tr}(A_* B_* \, \rho)}\Bigr).
\end{equation}
Since $A_*$ and $B_*$ commute, the right-hand sides of Equations \ref{eq:qA_expression} and \ref{eq:qB_expression} are equal.  It follows that the left-hand sides are also equal.

The final step is to observe that because $P_E$ commutes with Alice's measurement projectors, her probability estimate $q_A$ when she measures and then obtains the outcome $P_A^i$ agrees with the coarse-grained conditional on her common-certainty projector:
\begin{equation} \label{eq:final}
\operatorname{Pr}[P_E; P_A^i] = \operatorname{Tr}\Bigl(P_E \frac{A_* \, \rho \, A_*}{\operatorname{Tr}(A_* \, \rho)}\Bigr).
\end{equation}
Repeating the argument for Bob and using the equality of the left-hand sides of Equations \ref{eq:qA_expression} and \ref{eq:qB_expression}, we conclude that $q_A = q_B$.

An important point is that even though the recursion of certainty operators stabilizes at a finite stage, owing to the finiteness of the measurement sets, the full force of common certainty is essential.  It is already known in the classical setting that, for any finite order $n$, there are scenarios where Alice and Bob achieve $n$th-order certainty of their respective probability estimates, while still disagreeing on their values (Geanakoplos and Polemarchakis, 1982; Aumann and Brandenburger, 1995).  A general statement of our theorem therefore requires the complete, infinite hierarchy of certainty -- captured by the common-certainty operator $C_*$ -- to ensure agreement.  We also note that common-certainty probabilities are not, in general, the same as pooled probabilities that are obtained if Alice and Bob combine their respective measurement outcomes.  (See the Supplementary Material.)

Our Commutative Agreement Theorem holds in the tripartite setting of Figure \ref{fig:3part}, understood as satisfying the conditions of Equations \ref{eq:com1}--\ref{eq:com2}.  In the Supplementary Material, we describe a two-lab setting and a single-lab setting where the theorem again applies, subject to some modifications.

We now provide an example to show that the certainty recursion does not necessarily stabilize immediately.  This establishes that our commutative theorem has nontrivial epistemic content in that agents may need to refine what they are certain about.
\vspace{0.1in}

\noindent \textbf{Example 1:} \textit{Alice and Bob each perform local rank-$2$ projective measurements on their respective subsystems.  Specifically, Alice's Hilbert space decomposes into two orthogonal two-dimensional subspaces and Bob's Hilbert space into three orthogonal two-dimensional subspaces:}
\begin{equation}
{\cal H}_A = {\cal A}_0 \oplus {\cal A}_1, \,\,\,\, {\cal H}_B = {\cal B}_0 \oplus {\cal B}_1 \oplus {\cal B}_2.
\end{equation}
\textit{Thus, ${\cal H}_A \cong \mathbb{C}^4$ and ${\cal H}_B \cong \mathbb{C}^6$.  For each $i, j$, let $|\phi_{\mathcal A_i, \mathcal B_j}\rangle$ denote a maximally entangled Bell state on the two-dimensional subspace $\mathcal A_i \otimes \mathcal B_j$.  The global state involves the inaccessible qubit in $\mathcal H_C$, which is entangled with the sectors of $\mathcal H_A$ and $\mathcal H_B$.  Specifically, set $\rho = \ketbra{\psi}$ where:}
\begin{equation}
\ket\psi = \frac{1}{\sqrt{3}} \left( \ket{\phi_{{\cal A}_0, {\cal B}_0}}  \ket0_{{\cal H}_C} + \ket{\phi_{{\cal A}_1, {\cal B}_1}}  \ket0_{{\cal H}_C} + \ket{\phi_{{\cal A}_1, {\cal B}_2}}  \ket1_{{\cal H}_C} \right). 
\end{equation}

\textit{Alice measures $\Pi_A = \{P_A^0, P_A^1\}$ with $P_A^i = \Pi_{\mathcal A_i} \otimes \mathbb{I}_B \otimes \mathbb{I}_C$, while Bob measures $\Pi_B = \{P_B^0, P_B^1, P_B^2\}$ with $P_B^j = \mathbb{I}_A \otimes \Pi_{\mathcal B_j} \otimes \mathbb{I}_C$, where $\Pi_{A_i}$ and $\Pi_{B_j}$ are the orthogonal projectors onto the respective two-dimensional subspaces $A_i$ and $B_j$.  The property of interest concerns the qubit in lab $C$:}
\begin{equation}
P_E(\theta) = \mathbb{I}_A \otimes \mathbb{I}_B \otimes \ketbra{\phi_\theta} \,\, \text{where} \,\, \ket{\phi_\theta} = \cos\Bigl(\frac{\theta}{2}\Bigr) \ket0 + \sin\Bigl(\frac{\theta}{2}\Bigr) \ket1.
\end{equation}
\textit{Calculation yields:}
\begin{equation} \operatorname{Pr}[P_E; P_A^0] = \operatorname{Pr}[P_E; P_B^0] = \operatorname{Pr}[P_E; P_B^1] = \cos^2\Bigl(\frac{\theta}{2}\Bigr), \,\,\,\, \operatorname{Pr}[P_E; P_A^1] = \frac{1}{2}, \,\,\,\, \operatorname{Pr}[P_E; P_B^2] = \sin^2\Bigl(\frac{\theta}{2}\Bigr).
\end{equation}
\textit{Therefore, choosing $q_A = q_B = \cos^2 (\theta/2)$ where $\theta \not= \pi/2 + \pi z$, for an integer $z$, we have $A_0 = P_A^0$ and $B_0 = P_B^0 + P_B^1$.  A further calculation gives $C_A(B_0) = P_A^0$ and $C_B(A_0) = P_B^0$, from which, using Equations \ref{eq:recur-a}--\ref{eq:recur-b}, we find $A_1 = P_A^0$ and $B_1 = P_B^0$.  The recursion stabilizes at this step, so that $C_* = A_* B_* = P_A^0 P_B^0$.  Moreover, $\operatorname{Tr}(C_* \rho) = 1/3\ > 0$, so the probability assessment $\cos^2 (\theta/2)$ is common certainty at $\rho$.}
\vspace{0.1in}

We see in this example that the recursion does not stabilize immediately.  (Variants of the example could be designed that require additional steps until stabilization.)  Also, we obtain a range of examples between $q = 1$ ($\theta = 0$) and $q = 0$ ($\theta = \pi$), showing that our Commutative Agreement Theorem is not restricted to edge cases.

\section*{Non-Commutative Disagreement Theorem}
The setup is exactly the same as in the previous section, except that we no longer require all the commutativity conditions of Equations \ref{eq:com1} and \ref{eq:com2} to hold.  This breaks the Agreement Theorem and allows the distinctive phenomenon of common certainty of disagreement (CCD) to arise.
\vspace{0.1in}

\noindent \textbf{Non-Commutative Disagreement Theorem:} \textit{In the non-commuting regime, it can be common certainty at state $\rho$ that Alice assigns $q_A$ to $E$ and Bob assigns $q_B$ to $E$, while $q_A \not= q_B$.}
\vspace{0.1in}

This result is established via the next example.  (Later, we will establish a bound on the extent of disagreement possible in such scenarios.)
\vspace{0.1in}

\noindent \textbf{Example 2:} \textit{We remain in the three-lab setting.  Alice holds a qutrit, Bob's system decomposes into two orthogonal subspaces, and the third lab contains an inaccessible qubit, as follows:}
\begin{equation}
{\cal H}_A = {\cal A}_0 \oplus {\cal A}_1 \oplus {\cal A}_2, \,\,\,\, {\cal H}_B = {\cal B}_0 \oplus {\cal B}_1, \,\,\,\, {\cal H}_C = {\cal C}_0 \oplus {\cal C}_1,
\end{equation}
\textit{where each ${\cal A}_i = \operatorname{span}\{\ket{i}_A\}$, for $i = 0, 1, 2$, each ${\cal B}_j = \operatorname{span}\{\ket{\beta_j}_B\}$, for $j = 0, 1$, where $\{\ket{\beta_0}_B, \ket{\beta_1}_B\}$ is an orthonormal basis of ${\cal H}_B$, and each ${\cal C}_k = \operatorname{span}\{\ket{k}_C\}$, for $k = 0, 1$.  The global state is $\rho = \ketbra{\psi}$ where:}
\begin{equation}
\ket\psi = \frac{1}{\sqrt{3}}\bigl(\ket0_A  \ket{\beta_0}_B  \ket0_C + \ket1_A  \ket{\beta_0}_B  \ket0_C + \ket2_A  \ket{\beta_1}_B  \ket1_C\bigr).
\end{equation}

\textit{Alice measures $\Pi_A = \{P_A^0, P_A^1, P_A^2\}$ with $P_A^i = \Pi_{{\cal A}_i} \otimes \mathbb{I}_B \otimes \mathbb{I}_C$, while Bob measures $\Pi_B = \{P_B^0, P_B^1\}$ with $P_{B_j} = \mathbb{I}_A \otimes \Pi_{{\cal B}_j} \otimes \mathbb{I}_C$.  The property of interest acts on ${\cal H}_A \otimes {\cal H}_C$ and is chosen not to
commute with Alice's measurement:}
\begin{equation}
P_E = \ketbra{\phi}_{AC} \otimes \mathbb{I}_B \,\, \text{where} \,\, \ket{\phi}_{AC} = \frac{1}{\sqrt{2}}\bigl(\ket0_A  \ket0_C + \ket1_A  \ket0_C\bigr).
\end{equation}
\textit{All projectors for Alice and Bob commute, Bob's projectors commute with $P_E$, but Alice's projectors do not all commute with $P_E$: $[P_A^0, P_E] \neq 0$ and $[P_A^1, P_E] \neq 0$.  Calculation yields:}
\begin{equation}
\operatorname{Pr}[P_E; P_A^0] = \operatorname{Pr}[P_E; P_A^1] = \frac{1}{2}, \,\,\,\, \operatorname{Pr}[P_E; P_A^2] = 0, \,\,\,\, \operatorname{Pr}[P_E; P_B^0] = 1, \,\,\,\, \operatorname{Pr}[P_E ;P_B^1] = 0.
\end{equation}

\textit{Therefore, choosing $q_A = 1/2$ and $q_B = 1$, we have $A_0 = P_A^0 + P_A^1$ and $B_0 = P_B^0$.  Moreover, $C_A(B_0) = P_A^0 + P_A^1$ and $C_B(A_0) = P_B^0$, so that the recursion stabilizes immediately.  Since $\operatorname{Tr}(A_* B_* \, \rho) = \operatorname{Tr}(B_* A_* \, \rho) = \operatorname{Tr}(P_B^0 \, \rho) = 2/3 > 0$, we conclude that it is common certainty that Alice holds the probability estimate $q_A = 1/2$ and Bob holds the estimate $q_B = 1$.  We have actually exhibited a family of scenarios of CCD, since the phenomenon does not depend on Bob's measurement basis.}
\vspace{0.1in}

Note that in our non-commutative theorem, we do not include in the agents' information states any variable recording measurement order or stage of protocol execution, nor any higher-order beliefs about such facts.  Bob's probability assignment is obtained by conditioning the common prior $\rho$ only on his own outcome.  By no-signaling, and absent any public protocol information or other classical side information, Bob cannot infer from his local quantum data whether Alice has already measured.  One could imagine a richer model in which Bob assigns probabilities to such protocol facts, to Alice's beliefs about them, to Bob's beliefs about Alice's beliefs, and so on.  The most natural interpretation of our Non-Commutative Disagreement Theorem is that it is ``protocol-agnostic."  In an elaborated model with explicit order uncertainty, one may view it as the special case in which it is common certainty that Bob assesses $E$ before Alice's measurement is incorporated.  In the Discussion section, we consider the possibility of extending the protocol-agnostic framework to an epistemic model in which measurement order is an additional object of uncertainty.

\section*{Common Certainty with a Classical Register}
We now move from the protocol-agnostic setting to a protocol-explicit one.  Specifically, we suppose that the agents' measurement outcomes are recorded in a classical register encoding the transcripts of a shared recording protocol. This is not merely a notational refinement.  Once such a register is included, facts about which measurements have been performed and which outcomes have occurred become part of the physical state itself, and the agents may condition on them through a commuting algebra.  We show that this added classical structure removes the mechanism behind the common-certainty-of-disagreement phenomenon exhibited above and restores agreement.  In this sense, the classical Agreement Theorem reappears as an emergent consequence of quantum theory once measurement histories are classically recorded.

In a regime that violates commutativity of the measurements with the property of interest, the certainty recursion can stabilize with $q_A \neq q_B$, resulting in CCD, as Example 2 illustrated.  We now show that recording the agents' measurement outcomes in a classical register restores commutativity and agreement.  Our Classical Register Theorem therefore sheds new light on the Classical Agreement Theorem: In a quantum world, the classical result emerges from the physical act of classical recording.

To proceed, let $R$ denote the finite set of classical transcripts from protocol runs.  The classical register corresponds to the Hilbert space:
\begin{equation}
{\cal H}_R = \operatorname{span}\{\ket{r}_R : r \in R\},
\end{equation}
with associated commutative algebra:
\begin{equation} \label{eq:regalg}
{\cal A}_R = \bigl\{\mathbb{I}_{\cal H} \otimes \sum_{r \in S} \ketbra{r}_R : S \subseteq R \bigr\},
\end{equation}
where $\mathbb{I}_{\cal H}$ is the identity on the three-lab Hilbert space.  The recording process is modeled as a completely positive trace-preserving (CPTP) map $J : {\cal B}({\cal H}) \to {\cal B}({\cal H} \otimes {\cal H}_R)$, defined by:
\begin{equation}
\rho^\prime = J(\rho) = \sum_{r \in R} M_r \, \rho \, M_r^\dagger \otimes \ketbra{r}_R,
\end{equation}
where $M_r$ is the Kraus operator linked to transcript $r$.  The resulting state $\rho^\prime$ is block-diagonal in the pointer basis of ${\cal H}_R$, thereby representing a measurement with
classically recorded outcomes.

We next demonstrate that, once the outcomes are embedded in this commutative register algebra, the certainty recursion satisfies
the commutativity conditions necessary for our Quantum Agreement Theorem.  To make this explicit, write:
\begin{equation}
f_A : R \to \{1, \ldots, \ell_A\}, \,\, f_B : R \to \{1,\ldots,\ell_B\},
\end{equation}
for the functions extracting from each transcript $r$ the outcomes registered by Alice and Bob.  Define their respective projectors on the register:
\begin{equation}
\Pi^{(i,A)}_R = \sum_{\{r \, : \, f_A(r) = i\}} \ketbra{r}_R, \,\,\,\,\,\, \Pi^{(j,B)}_R = \sum_{\{r \, : \, f_B(r) = j\}} \ketbra{r}_R.
\end{equation}
In the full system, these operators lie in the commutative algebra ${\cal A}_R$ of
Equation \ref{eq:regalg}, and therefore, for all pairs of classical outcomes $(i,j)$:
\begin{equation}
\bigl[\mathbb{I}_{\cal H} \otimes \Pi^{(i,A)}_R, \mathbb{I}_{\cal H} \otimes \Pi^{(j,B)}_R \bigr] = 0, \,\, \bigl[\mathbb{I}_{\cal H} \otimes \Pi^{(\cdot)}_R, P_E \otimes \mathbb{I}_R \bigr] = 0,
\end{equation}
which says that the commutativity assumptions of Equations \ref{eq:com1}--\ref{eq:com2} are now satisfied on $\mathbb{I}_{\cal H} \otimes \mathcal{H}_R$.

For any projector $P \in \mathcal{A}_R$ and any property $F$ acting on the system, the conditional probability of $F$ given $P$ is the analogous formula to Equation \ref{eq:born}:
\begin{equation}
\operatorname{Pr}[P_F \otimes \mathbb{I}_R; P] = \frac{\operatorname{Tr}\bigl[(P_F \otimes \mathbb{I}_R) \, P \, \rho^\prime \, P \bigr]}{\operatorname{Tr}(P \, \rho^\prime)},
\end{equation}
where we use the recorded state $\rho^\prime$.  With this definition, we can proceed to define analogs of the earlier certainty recursion of Equations \ref{eq:init-a}--\ref{eq:recur-b}.  Letting $C_*^\prime $ be the stabilized common-certainty operator, our Commutative Agreement Theorem carries over directly.
\vspace{0.1in}

\noindent \textbf{Definition:} \textit{Fix a property of interest $E$ and probability estimates $q_A^\prime$ and $q_B^\prime$ for Alice and Bob, respectively.  Then $q_A^\prime$ and $q_B^\prime$ are recorded common certainty at (recorded) state $\rho^\prime$ if $\operatorname{Tr}(C_*^\prime \, \rho^\prime) > 0$.}
\vspace{0.1in}

\noindent \textbf{Classical Register Theorem:} \textit{If it is recorded common certainty at state $\rho^\prime$ that Alice assigns $q_A^\prime$ to $E$ and Bob assigns $q_B^\prime$ to $E$, then $q_A^\prime = q_B^\prime$.}
\vspace{0.1in}

We illustrate this result by running the classical register on Example 2 from earlier, where we previously found CCD.
\vspace{0.1in}

\noindent \textbf{Example 2 Contd.:} \textit{In the recorded setting, we use the same local measurements as before, but now work on the enlarged space ${\cal H} \otimes {\cal H}_R$.  We write the transcripts in the form $r = (i, j)$ where $i \in \{0, 1, 2\}$ and $j \in \{0, 1\}$.  The recorded state is:}
\begin{equation}
\rho^\prime = \sum_{i, j} M_{(i, j)} \, \rho \, M^\dagger_{(i, j)} \otimes \ketbra{(i, j)}_R,
\end{equation}
\textit{with nonzero blocks $(0, 0)$, $(1, 0)$, and $(2, 1)$, each with trace $1/3$.  The conditional probabilities of $E$, calculated from $\rho^\prime$, are:}
\begin{align}
\operatorname{Pr}[P_E \otimes \mathbb{I}_R \, ; \mathbb{I}_{\cal H} \otimes \Pi^{(i=0,A)}_R] &= \operatorname{Pr}[P_E \otimes \mathbb{I}_R \, ; \mathbb{I}_{\cal H} \otimes \Pi^{(i=1,A)}_R] = \frac{1}{2}, \,\, \operatorname{Pr}[P_E \otimes \mathbb{I}_R \, ; \mathbb{I}_{\cal H} \otimes \Pi^{(i=2,A)}_R] = 0, \\
\operatorname{Pr}[P_E \otimes \mathbb{I}_R \, ; \mathbb{I}_{\cal H} \otimes \Pi^{(j=0,B)}_R] &= \frac{1}{2}, \,\, \operatorname{Pr}[P_E \otimes \mathbb{I}_R \, ; \mathbb{I}_{\cal H} \otimes \Pi^{(j=1,B)}_R] = 0.
\end{align}

\textit{Choosing $q_A^\prime = q_B^\prime = \frac{1}{2}$, we obtain the immediately-stabilized recursion:}
\begin{equation}
A_*^\prime = A_0^\prime = \mathbb{I}_{\cal H} \otimes (\Pi^{(i=0,A)}_R + \Pi^{(i=1,A)}_R), \,\, B_*^\prime = B_0^\prime = \mathbb{I}_{\cal H} \otimes \Pi^{(j=0,B)}_R,
\end{equation}
\textit{where $\operatorname{Tr}(C_*^\prime \, \rho^\prime) = \frac{2}{3} > 0$.}

\textit{Thus, there is recorded common certainty at $\rho^\prime$ that $q_A^\prime = q_B^\prime = \frac{1}{2}$.  It can be checked that the other common-certainty branch in Example 2, with agreement $q_A = q_B = 0$, is preserved under recording as a branch with $q_A^\prime = q_B^\prime = 0$ and probability $\operatorname{Tr}(C_*^\prime \, \rho^\prime) = \frac{1}{3} > 0$.}

\begin{table}[h!]
\centering
\renewcommand{\arraystretch}{1.3}
\begin{tabular}{lc}
\toprule
\textbf{Scenario} & \textbf{Probability of $E$} \\
\midrule
Prior (pre-measurement)
& $\Tr(P_E \, \rho) =\frac{2}{3}$ \\
Common certainty of disagreement
& Alice: $q_A = \frac{1}{2}$, Bob: $q_B = 1$ \\
Agreement branch I (post-register)
& Alice: $q_A = \frac{1}{2}$, Bob: $q_B = \frac{1}{2}$ \\
Agreement branch II (post-register)
& Alice: $q_A = 0$, Bob: $q_B = 0$ \\
\bottomrule
\end{tabular}
\caption{Non-Commuting and Classical-Register Probabilities}
\label{tab:1}
\end{table}

Table \ref{tab:1} summarizes the calculations for Example 2.  We see how the classical register restores agreement.  The CCD branch has been classicalized to $q_A = q_B = 1/2$.  (Note that this agreed-on probability is different from the prior probability of $E$.)  The common-certainty branch with agreement, at $q_A = q_B = 0$, has been carried over unchanged, as we expect.

\section*{Bounds on Disagreement}
We now examine the question of the extent of disagreement across agents allowed by QM.  We first state a result that applies to the general case, commuting or non-commuting, that rules out ``extreme" disagreement between agents.  This is true even for second-level certainty, let alone common certainty.  (The theorem generalizes an earlier result in Contreras-Tejada et al., 2021, where the $0\operatorname{-}1$ phenomenon is called singular disagreement.)
\vspace{0.1in}

\noindent \textbf{$\mathbf{0\operatorname{-}1}$ Impossibility Theorem:} \textit{There is no positive-probability state where Alice is certain of $E$ and certain that Bob is certain of not-$E$.  Formally:}
\begin{equation}
\operatorname{Tr}\bigl(Q_A(E; \, 1) \, C_A(Q_B(E; \, 0)) \, \rho \bigr) = 0.
\end{equation}

This result does not use any commutativity assumptions and therefore imposes an epistemic constraint independent of any compatibility conditions.  Our impossibility theorem has bite.  The Supplementary Material contains an example of a no-signaling box (Popescu and Rohrlich, 1994) where 0-1 disagreement does arise: Alice is certain of $E$ and certain that Bob is certain of not-$E$.  We also present an equivalent formulation of the example on phase space where the sole non-classicality is the use of signed probabilities (Abramsky and Brandenburger, 2011).

Next, we show that our Commutative Agreement Theorem is robust to small changes to the set-up.  We first examine the case where Alice and Bob start from different initial states $\rho_A$ and $\rho_B$, in place of the same initial state $\rho$ as previously.  Such a difference might arise from, for example, noise, channel disturbance, or the presence of an eavesdropper.  Building on a classical analysis by Hellman (2013), it can be shown that in our commuting regime:
\begin{equation} \label{eq:bound-a}
|q_A - q_B| \le \frac{2 \, \|\rho_A - \rho_B\|_1}{\operatorname{max}\{\operatorname{Tr}(B_* A_* \, \rho_A), \operatorname{Tr}(A_* B_* \, \rho_B)\}},
\end{equation}
where $\|\cdot\|_1$ is the trace norm of the difference between two quantum states.  This formula allows us to incorporate perturbations of the quantum state.  For example, in the eavesdropping scenario, after Alice has performed her measurement, a third agent, Eve, disturbs the state from $\rho_A$ to $\rho_B$ before Bob makes his measurement.  Inequality \ref{eq:bound-a} shows that the resulting departure from agreement is controlled.

The second robustness test relaxes the certainty modality.  Given a measurement outcome  $P_X^k$ for agent $X$, we follow the classical treatment in Monderer and Samet (1989) and say that $X$ is \textit{$(1 - \epsilon)$-certain} of the property $F$ if, in place of probability $1$ in Equation \ref{eq:born}, we have:
\begin{equation} \label{eq:born-epsilon}
\text{Prob}[P_F; P_X^k] \ge 1 - \epsilon,
\end{equation}
for a small positive number $\epsilon$.  Using Equation \ref{eq:born-epsilon} in place of Equation \ref{eq:born} in the certainty recursion of Equations \ref{eq:init-a}--\ref{eq:recur-b}, we arrive at the condition of \textit{common $(1 - \epsilon)$-certainty} of $F$.  From this, one can show in our commuting regime: If it is common $(1 - \epsilon)$-certainty at state $\rho$ that Alice assigns $q_A$ to $E$ and Bob assigns $q_B$ to $E$, then:
\begin{equation} \label{eq:bound-b}
|q_A - q_B| \le 2\epsilon.
\end{equation}
This second bound is relevant because, in practice, agents must confront finite statistics, detector inefficiencies, and coarse-gained outcomes, making $(1 - \epsilon)$-certainty more realistic.  Inequality \ref{eq:bound-b} shows that the departure from agreement is again smooth.  Proofs of both robustness results are in the Supplementary Material.

\section*{Discussion}
Our results operate at the interface of the logic of belief and the physics of measurement in a quantum setting.  We establish that the quantum world supports an analog of the Agreement Theorem of classical epistemics (Aumann, 1976) in the commuting case, while we find the distinctive epistemic phenomenon of CCD in the non-commuting regime.  We therefore identify a clear boundary between classical intersubjectivity (ruled out under common certainty) and quantum intersubjectivity (allowed under common certainty).  We show that CCD reduces to agreement under classical recording, offering a way to understand the classical Agreement Theorem in a quantum world.

Importantly, QM bounds the disagreement that is possible across agents, via our $0\operatorname{-}1$ impossibility theorem.  This result validates that our epistemic analysis has physical bite -- there are epistemic states ruled out by QM.  The impossibility result also distinguishes our approach from Frauchiger and Renner (2018).  Their work shows that chaining together statements of Alice's certainty across mutually incompatible measurement branches yields a scenario that looks exactly like $0\operatorname{-}1$ disagreement, which is then -- very reasonably -- argued to suggest that conceptual inconsistencies may lurk in QM.  By insisting that Alice's certainties refer to the same underlying quantum state, we show, via our impossibility result, that the Frauchiger-Renner scenario cannot arise.

We still allow disagreement in QM, albeit of a milder nature than $0\operatorname{-}1$ disagreement.  So, we see our findings as challenging the notion that intersubjective disagreement should not be allowed in physics and, in this way, push in a direction broadly similar to the QBist and relational accounts of QM (Fuchs and Schack, 2013; Rovelli, 1996).  By introducing the tools of formal epistemics, our approach may suggest new treatments of these approaches.

In previous work, Contreras-Tejada et al.~(2021) formulate the agreement question in terms of a no-signaling box.  Within this environment, a quantum agreement theorem follows and fails only for superquantum boxes.  Liu, Chung, and Ramanathan (2024) generalize this result, while remaining in the commuting regime, to establish an agreement theorem for all quantum and almost-quantum regions.  The present paper extends Contreras-Tejada et al.~(2021) so that Alice's and Bob's operators can inhabit distinct, possibly non-commuting, operator algebras, and, as we have seen, this admits CCD even in a quantum setting.  Leifer and Duarte (2022) treat the knowledge modality of the original classical Agreement Theorem, rather than the certainty modality, and establish the impossibility of common knowledge of disagreement in the setting of generalized probability theory (Barrett, 2007).  (We comment on the difference between these modalities in the Supplementary Material.)  Khrennikov and Basieva (2014) and Khrennikov (2015) study quantum-like agents interacting with quantum systems and are able to obtain CCD in this manner.

A natural extension of our framework would enrich the present protocol-agnostic model by a classical variable encoding the order of measurement by Alice and Bob, together with a description of the agents' beliefs about that order, beliefs about beliefs about that order, and so on to higher levels as required.  In Example 2, such an extension is immediate because Alice's and Bob's local measurements commute and order affects only the state Bob uses to form his probability estimate of $E$.  If Bob believes that Alice has already performed her measurement, his relevant state is:
\begin{equation}
\bar \rho_A = \sum_i P_A^i \, \rho \, P_A^i,
\end{equation}
from which he estimates the probability of $E$ to be $1/2$ not $1$.  Letting $\pi$ denote the probability, given outcome $j = 0$, that Bob assigns to Alice's having already measured, we obtain:
\begin{equation}
q_B = \pi \cdot \dfrac{1}{2} + (1 - \pi) \cdot 1 = 1 -\dfrac{\pi}{2}.
\end{equation}
Bob's posterior interpolates between the protocol-agnostic value $1$ we calculated earlier and the order-dependent value $1/2$.  In a fuller epistemic model, Alice may be uncertain about Bob's parameter $\pi$, Bob may be uncertain about Alice's beliefs about $\pi$, and so on.  Summing up, the quantum model supplies the order-dependent posteriors, while a conventional epistemic overlay handles uncertainty over order.  A further extension of the model would incorporate a process matrix for the order of measurement, which need not be fixed and could allow an indefinite causal structure (Oreshkov, Costa, and Brukner, 2012; Chiribella et al., 2013).  We hope the results reported in this paper are a first step towards these important extensions.

Another open direction is to allow Alice and Bob to undertake measurements described by POVMs in place of the PVMs assumed in this paper.  Now, the post-measurement state can depend not only on the outcome but also on the particular quantum instrument realizing the POVM (G\"uhne et al., 2023).  Information about these instruments might or might not be accessible to the agents, and, in another future extension, we would need to enrich the epistemics further to account for this feature.

We note that CCD, as a distinctive kind of quantum strangeness, also becomes, under the modern view, a new quantum resource.  We leave to future work possible applications to quantum computing and information (Nielsen and Chuang, 2011), to the design and performance of quantum-enabled economic games (Khan et al., 2018; Auletta et al., 2021), and to the detection of other manifestations of non-classicality (Smirne et al., 2018; Strasberg and Garc\'ia D\'iaz, 2019).  We end with what we think is an intriguing possibility: The value of CCD as a quantum resource may stem precisely from the fact that agents have common certainty of it.

\section*{Supplementary Material}

\renewcommand{\thesection}{S.1}
\setcounter{equation}{0}
\renewcommand\theequation{S.\arabic{equation}}
\setcounter{theorem}{0}
\renewcommand\thetheorem{S.\arabic{theorem}}
\setcounter{lemma}{0}
\renewcommand\thelemma{S.\arabic{lemma}}

\section{Classical Agreement Theorem with the Certainty Modality}
We begin with a review of the classical Agreement Theorem.  The original statement due to Aumann (1976) is couched in terms of the knowledge modality and common knowledge.  Here, we establish a different classical baseline by using the certainty modality: Alice is certain of an event $E$ if she assigns probability $1$ to $E$ given her information (Brandenburger and Dekel, 1987; Monderer and Samet, 1989).  This shift is important in aligning our formal arguments with real-world laboratory procedures, since a probability-$1$ assignment to an event is a statement about local frequencies in a lab.

In addition, it can be shown that common knowledge of probability assessments yields agreement in any phase-space model with signed probabilities (Brandenburger et al., 2024, Theorem 4.1).  This result also aligns with Leifer and Duarte (2022).  By Abramsky and Brandenburger (2011, Theorem 5.9), the same is then true for any no-signaling model.  This gives another reason why we prefer the certainty modality.  Using knowledge erases the boundary between quantum and superquantum systems (cf.~the discussion of this boundary in the Introduction in the main text).

To proceed, fix a finite probability space $(\Omega, p)$ and information partitions ${\cal P}_A$ and ${\cal P}_B$ for Alice and Bob, respectively.  We fix versions of conditional probability $p\bigl(\cdot \, | \, {\cal P}_A(\omega)\bigr)$ and $p\bigl(\cdot \, | \, {\cal P}_B(\omega)\bigr)$ for Alice and Bob, respectively, where ${\cal P}_A(\omega)$ and ${\cal P}_B(\omega)$ are the cells of Alice's and Bob's partitions that contain $\omega$.  (As an aside, this formulation extends without change -- needing only the dominated convergence theorem -- to the general measurable case where $\sigma$-fields replace partitions.)

Now, given an event $E$ in $\Omega$ and probabilities $q_A$ and $q_B$, let:
\begin{align}
A_0 &= \bigl\{ \omega \in \Omega : p\bigl(E \, | \, {\cal P}_A(\omega)\bigr) = q_A \bigr\}, \label{eq:suppa}\\
B_0 &= \bigl\{ \omega \in \Omega : p\bigl(E \, | \, {\cal P}_B(\omega)\bigr) = q_B \bigr\}. \label{eq:suppb}
\end{align}
The event $A_0$ is the event that Alice assigns probability $q_A$ to $E$; likewise for the event $B_0$.  Next define recursively, for $n \ge 0$:
\begin{align}
A_{n+1} &= A_n \cap \bigl\{ \omega \in \Omega : p(B_n \, | \, {\cal P}_A(\omega)) = 1 \bigr\}, \label{eq:supp1} \\
B_{n+1} &= B_n \cap \bigl\{ \omega \in \Omega : p(A_n \, | \, {\cal P}_B(\omega)\bigr) = 1 \bigr\}, \label{eq:supp2}
\end{align}
and:
\begin{equation}
C_\infty = \bigcap_{n=0}^\infty A_n \cap \bigcap_{n=0}^\infty B_n.
\end{equation}
In words, it is common certainty at a state $\omega \in \Omega$ that Alice assigns probability $q_A$ to $E$ and Bob assigns probability $q_B$ to $E$ if $\omega \in C_\infty$.  Note that a special case is that $q_A = q_B  = 1$.  Then we can simply say that the event $E$ is common certainty at $\omega$.

\begin{theorem}
\textit{If it is common certainty at state $\rho$ that Alice assigns $q_A$ to $E$ and Bob assigns $q_B$ to $E$, and $p(C_\infty) > 0$, then $q_A = q_B$.}
\end{theorem}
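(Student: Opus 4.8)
The plan is to mimic Aumann's original argument for the classical Agreement Theorem, but adapted to the certainty (probability-one) modality, where the principal complication is that the common-certainty event $C_\infty$ is constructed as a countable intersection and need not be a union of cells of either agent's partition; all containments will therefore hold only up to $p$-null sets. In the knowledge version one would simply decompose $C_\infty$ into cells of the meet partition; here that clean description must be replaced by a null-set argument.

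First I would record two structural facts. Since $A_0$ and each set $\{\omega : p(B_n \mid \mathcal{P}_A(\omega)) = 1\}$ are unions of $\mathcal{P}_A$-cells, induction shows every $A_n$ is a union of $\mathcal{P}_A$-cells; symmetrically every $B_n$ is a union of $\mathcal{P}_B$-cells. Hence for $\omega \in C_\infty$ with $p(\mathcal{P}_A(\omega)) > 0$: from $\omega \in A_0$ we get $\mathcal{P}_A(\omega) \subseteq A_0$, so $p(E \mid \mathcal{P}_A(\omega)) = q_A$; and from $\omega \in A_{n+1} \subseteq \{p(B_n \mid \mathcal{P}_A) = 1\}$ for all $n \ge 0$ together with $\mathcal{P}_A(\omega) \subseteq \bigcap_n A_n$, the countable intersection of probability-one events gives $p(C_\infty \mid \mathcal{P}_A(\omega)) = 1$. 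The analogous statements hold with Alice and Bob interchanged.

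Second, I would use these facts to show that $C_\infty$ coincides, up to a $p$-null set, with a countable disjoint union $\bigcup_i D_i$ of positive-probability $\mathcal{P}_A$-cells, each of which satisfies $D_i \subseteq A_0$ and hence $p(E \mid D_i) = q_A$. For the inclusion of $C_\infty$ in $\bigcup_i D_i$ modulo null sets: any $\omega \in C_\infty$ lying in a positive-probability cell has $p(\mathcal{P}_A(\omega) \cap C_\infty) = p(\mathcal{P}_A(\omega))$ by the certainty property just proved, so that cell is one of the $D_i$; and the set of points of $C_\infty$ lying in zero-probability cells is itself null. The reverse inclusion modulo null sets is immediate, since $p(D_i \setminus C_\infty) = 0$ for each $i$ and there are only countably many. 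Finally, each such $D_i$ meets $A_0$ (because it meets $C_\infty \subseteq A_0$) and is a single $\mathcal{P}_A$-cell while $A_0$ is a union of cells, so $D_i \subseteq A_0$.

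The conclusion then follows by computing $p(E \cap C_\infty)$ two ways: via the $\mathcal{P}_A$-decomposition, $p(E \cap C_\infty) = \sum_i p(E \mid D_i)\,p(D_i) = q_A\,p(C_\infty)$, and by the symmetric argument for Bob, $p(E \cap C_\infty) = q_B\,p(C_\infty)$; since $p(C_\infty) > 0$ by hypothesis, dividing gives $q_A = q_B$. The main obstacle is the third step: the null-set bookkeeping that substitutes for the "union of meet-partition cells" description available in the knowledge setting. Once $C_\infty$ is pinned down, for each agent separately, as a union of that agent's cells modulo null sets with each cell sitting inside $A_0$ (resp.\ $B_0$), the final conditional-probability computation is routine.
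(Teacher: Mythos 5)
Your proof is correct, but it takes a different route from the paper's. The paper exploits finiteness of $\Omega$ at the outset: the non-increasing sequences $A_n$, $B_n$ stabilize at some finite $N$, so $C_\infty = A_N \cap B_N$ exactly, and the argument then works with $A_N$ (which is precisely a union of $\mathcal{P}_A$-cells $\pi$ with $p(B_N \mid \pi) = 1$ and $p(E \mid \pi) = q_A$) to get $p(A_N \cap B_N) = p(A_N)$ and $p(E \cap A_N) = q_A\,p(A_N)$, hence $p(E \cap A_N \cap B_N) = q_A\,p(A_N \cap B_N) = q_B\,p(A_N \cap B_N)$, with no null-set considerations at all. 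You instead bypass stabilization entirely and work directly with the countable intersection $C_\infty$: your key step is that for a.e.\ $\omega \in C_\infty$ one has $p(C_\infty \mid \mathcal{P}_A(\omega)) = 1$ (countable intersection of conditional-probability-one events), so $C_\infty$ agrees modulo null sets with a countable union of positive-probability $\mathcal{P}_A$-cells $D_i \subseteq A_0$, giving $p(E \cap C_\infty) = q_A\,p(C_\infty)$ and symmetrically $q_B\,p(C_\infty)$. The obstacle you flag -- that $C_\infty$ is not a union of either agent's cells -- is real, but in the finite setting the paper dodges it more cheaply via stabilization; your null-set bookkeeping is the price of not using finiteness. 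What your route buys is generality: it goes through verbatim for countable or general measurable settings with $\sigma$-fields (the case the paper only mentions as an aside), in the spirit of the Monderer--Samet / Brandenburger--Dekel treatment of common certainty, whereas the paper's argument is tied to the finite stabilization step. Both proofs share the same final computation, namely evaluating $p(E \cap C)$ against each agent's cell decomposition of the common-certainty event $C$.
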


\begin{proof}
Since $\Omega$ is finite, and the sets $A_n$ and $B_n$ are non-increasing, there is an integer $N$ such that $A_n = A_N$ and $B_n = B_N$ for all $n \ge N$.  It follows from Equations \ref{eq:supp1}--\ref{eq:supp2} that:
\begin{align}
A_N &\subseteq \bigl\{ \omega \in \Omega : p(B_N \, | \, {\cal P}_A(\omega)) = 1 \bigr\}, \label{eq:supp3}\\
B_N &\subseteq \bigl\{ \omega \in \Omega : p(A_N \, | \, {\cal P}_B(\omega)) = 1 \bigr\}. \label{eq:supp4}
\end{align}

Observe that $A_N$ is a union of cells of ${\cal P}_A$, which we write as $\cup\pi$.  Equation \ref{eq:supp3} then implies $p(B_N \, | \, \pi) = 1$ for each such $\pi$.  From this:
\begin{equation} \label{eq:supppi1}
p(A_N \cap B_N) = \sum_\pi p(\pi) \, p(B_N \, | \, \pi) = p(A_N).
\end{equation}
Next, we use $A_N \subseteq A_0$ to write $p(E \, | \, \pi) = q_A$ for the same partition cells $\pi$.  By the same argument as just given, we then get:
\begin{equation} \label{eq:supppi2}
p(E \cap A_N) = q_A \times p(A_N).
\end{equation}
Equation \ref{eq:supppi1} implies that for any event $E$:
\begin{equation}
p(E \cap A_N) = p(E \cap A_N \cap B_N),
\end{equation}
which, together with Equation \ref{eq:supppi2} yields:
\begin{equation}
q_A \times p(A_N) = p(E \cap A_N \cap B_N),
\end{equation}
or, using Equation \ref{eq:supppi1} again:
\begin{equation}
q_A \times p(A_N \cap B_N) = p(E \cap A_N \cap B_N).
\end{equation}

Running exactly the same argument with Bob in place of Alice, we obtain:
\begin{equation}
q_B \times p(A_N \cap B_N) = p(E \cap A_N \cap B_N).
\end{equation}
But $p(C_\infty) > 0$ implies $p(A_N \cap B_N) > 0$, so we conclude that $q_A = q_B$, as required.
\end{proof}

Figure \ref{fig:pool} shows that common certainty of probability assessments need not coincide with the shared assessment Alice and Bob would obtain after pooling their information.  Here, Alice's (resp.~Bob's) partition is the rows (resp.~columns).  The event of interest $E$ is $\{\omega_1, \omega_4\}$.  At any state, it is common certainty that Alice and Bob both assign probability $1/2$ to $E$.  Yet, if they pooled their information, their shared assessment would be $1$ or $0$, depending on the true state.

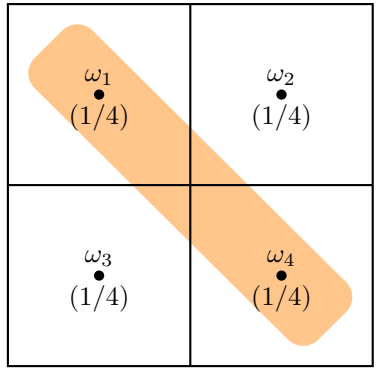
\begin{figure}[h]
\centering
\begin{tikzpicture}[scale=1.2]
\begin{scope}[shift={(2,2)}, rotate=-45]
\fill[orange!45, rounded corners=8pt] (-2.2,0.45) rectangle (2.2,-0.45);
\end{scope}
\draw[thick] (0,0) rectangle (4,4);
\draw[thick] (0,2) -- (4,2);
\draw[thick] (2,0) -- (2,4);
\fill (1,3) circle (1.6pt);
\node[above] at (1,3) {$\omega_1$};
\node[below] at (1,3) {$(1/4)$};
\fill (3,3) circle (1.6pt);
\node[above] at (3,3) {$\omega_2$};
\node[below] at (3,3) {$(1/4)$};
\fill (1,1) circle (1.6pt);
\node[above] at (1,1) {$\omega_3$};
\node[below] at (1,1) {$(1/4)$};
\fill (3,1) circle (1.6pt);
\node[above] at (3,1) {$\omega_4$};
\node[below] at (3,1) {$(1/4)$};
\end{tikzpicture}
\caption{Information pooling}
\label{fig:pool}
\end{figure}

\renewcommand{\thesection}{S.2}

\section{Proof of of the Commutative Agreement Theorem}

\begin{lemma} \label{lem:proj-stab}
For all $n$, the operators $A_n$ and $B_n$ are projectors.
\end{lemma}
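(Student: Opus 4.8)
The plan is to prove, by simultaneous induction on $n$, a slightly stronger statement: that $A_n$ is a sum of a sub-family of Alice's measurement projectors $\{P_A^k\}_{k=1}^{\ell_A}$ and $B_n$ is a sum of a sub-family of Bob's measurement projectors $\{P_B^k\}_{k=1}^{\ell_B}$. Any such sum of mutually orthogonal projectors is itself a projector, so this stronger invariant immediately gives the lemma; moreover, as the argument will show, it is precisely the right hypothesis to make the induction close, since it forces everything appearing on Alice's side of the recursion to lie in the abelian algebra generated by $\Pi_A$ (and symmetrically for Bob).

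For the base case, I would just read off from Equations \ref{eq:init-a}--\ref{eq:init-b} together with \ref{eq:proj_sum} that $A_0 = Q_A(E; q_A) = \sum_{k \in K_A(E; q_A)} P_A^k$ and $B_0 = Q_B(E; q_B) = \sum_{k \in K_B(E; q_B)} P_B^k$, which are of the required form by definition.

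For the inductive step, assume $A_n$ and $B_n$ have the stated form, so in particular both are projectors. Since $B_n$ is a projector, applying the Born rule with L\"uders update \ref{eq:born} to $P_F = B_n$ gives a well-defined value $\operatorname{Prob}[B_n; P_A^k] \in [0,1]$ for each $k$ with $\operatorname{Tr}(P_A^k \, \rho) > 0$; hence the index set $K_A(B_n; 1)$ of \ref{eq:index} and the operator $C_A(B_n) = Q_A(B_n; 1) = \sum_{k \in K_A(B_n; 1)} P_A^k$ are well-defined, and $C_A(B_n)$ is again a sum of a sub-family of $\{P_A^k\}$. Writing $A_n = \sum_{k \in S} P_A^k$ and $C_A(B_n) = \sum_{k \in T} P_A^k$ and using mutual orthogonality $P_A^k P_A^{k'} = \delta_{k k'} P_A^k$, the recursion \ref{eq:recur-a} yields $A_{n+1} = A_n \, C_A(B_n) = \sum_{k \in S \cap T} P_A^k$, which again has the required form; the symmetric computation with \ref{eq:recur-b} handles $B_{n+1}$. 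This closes the induction, and in particular every $A_n$ and $B_n$ is a projector.

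The main point to be careful about — and the reason the bare assertion ``$A_n$ is a projector'' is an awkward induction hypothesis — is that $A_{n+1}$ is defined in \ref{eq:recur-a} as a \emph{product} of two projectors, and a product of projectors need not be a projector unless they commute. The real content of the lemma is thus that Alice's recursion never leaves the commutative algebra of her own measurement outcomes; tracking the ``sum of a sub-family of outcome projectors'' form makes this commutativity automatic. A second, minor subtlety is that $C_A(B_n)$ only makes sense once $B_n$ is known to be a projector, which is why $A$ and $B$ must be carried together in a single simultaneous induction. Beyond this bookkeeping I do not anticipate any genuinely hard step.
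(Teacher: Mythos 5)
Your proof is correct and follows essentially the same route as the paper's: both arguments rest on the fact that $A_n$ and $C_A(B_n)$ are sums of sub-families of Alice's mutually orthogonal outcome projectors (and symmetrically for Bob), so that the product in the recursion stays a projector. The only difference is presentational -- you carry the ``sum of a sub-family of $\{P_A^k\}$'' invariant explicitly through the induction, which the paper leaves implicit when it invokes the product of commuting projectors.
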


\begin{proof}
Each of $A_0$ and $B_0$ is a sum of mutually orthogonal projectors, hence a projector.  The certainty operators $C_A(B_n)$ and $C_B(A_n)$ are also sums of orthogonal projectors.  Since a product of commuting projectors is again a projector, it follows that $A_{n+1}$ and $B_{n+1}$ are both projectors, as required.
\end{proof}

\begin{lemma} \label{lem:commute}
For all $n, m$, we have $[A_n , B_m] = 0$.  In particular, we have $[A_*, B_*] = 0$.
\end{lemma}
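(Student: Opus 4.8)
The plan is to show that every $A_n$ lies in the subalgebra $\mathcal{M}_A \subseteq \mathcal{B}(\mathcal{H})$ generated by Alice's measurement projectors $\{P_A^i\}_{i=1}^{\ell_A}$, and symmetrically that every $B_m$ lies in the subalgebra $\mathcal{M}_B$ generated by $\{P_B^j\}_{j=1}^{\ell_B}$. Once this is established, the commutativity $[A_n, B_m] = 0$ follows solely from the elementwise relation $[P_A^i, P_B^j] = 0$ of Equation \ref{eq:com1}, with no further reference to the state $\rho$ or to the particular form of the recursion.

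First I would prove the membership claim by induction on $n$, running the two recursions in parallel. For the base case, $A_0 = Q_A(E; q_A) = \sum_{k \in K_A(E;\,q_A)} P_A^k$ is a sum of generators of $\mathcal{M}_A$, hence $A_0 \in \mathcal{M}_A$; likewise $B_0 \in \mathcal{M}_B$. For the inductive step, assume $A_n \in \mathcal{M}_A$ and $B_n \in \mathcal{M}_B$. By Lemma \ref{lem:proj-stab} the operator $B_n$ is a projector, so $C_A(B_n) = Q_A(B_n;\,1) = \sum_{k \in K_A(B_n;\,1)} P_A^k$ is well defined, and --- this is the one point to flag --- although the index set $K_A(B_n;\,1)$ is determined by $B_n$ and $\rho$ through the Born rule, the terms being summed are always Alice's own projectors, so $C_A(B_n) \in \mathcal{M}_A$ regardless. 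Since $\mathcal{M}_A$ is closed under multiplication, $A_{n+1} = A_n \, C_A(B_n) \in \mathcal{M}_A$; the mirror-image argument gives $B_{n+1} \in \mathcal{M}_B$, closing the induction.

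Next I would invoke the standard fact that commuting generating sets generate commuting algebras. Concretely: for fixed $j$, the set $\{x \in \mathcal{B}(\mathcal{H}) : [x, P_B^j] = 0\}$ is a subalgebra and, by Equation \ref{eq:com1}, contains every $P_A^i$, hence contains all of $\mathcal{M}_A$; thus $[a, P_B^j] = 0$ for every $a \in \mathcal{M}_A$ and every $j$. Fixing now $a \in \mathcal{M}_A$, the set $\{y \in \mathcal{B}(\mathcal{H}) : [a, y] = 0\}$ is a subalgebra containing every $P_B^j$, hence contains all of $\mathcal{M}_B$. Applying this with $a = A_n$ and $y = B_m$ yields $[A_n, B_m] = 0$ for all $n, m$. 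Finally, by finiteness of the measurements there is an $N$ with $A_* = A_N$ and $B_* = B_N$, so $[A_*, B_*] = [A_N, B_N] = 0$.

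I do not expect a genuine obstacle here; the argument is essentially bookkeeping. The only subtlety worth stating carefully is the one noted above --- that $C_A(B_n)$, although defined via quantities depending on $B_n$ and $\rho$, is structurally a sum of Alice's projectors, so the induction never leaves $\mathcal{M}_A$. (A reader preferring to avoid operator-algebra language can replace this by a direct double induction: expand $A_n$ and $B_m$ as finite sums of products of the $P_A^i$ and the $P_B^j$ respectively, and commute the two kinds of factors past one another termwise using Equation \ref{eq:com1}.)
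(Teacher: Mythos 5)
Your proposal is correct and follows the same route as the paper's proof: the paper likewise argues that every $A_n$ lies in the algebra generated by $\{P_A^i\}$ and every $B_m$ in the algebra generated by $\{P_B^j\}$, and then uses elementwise commutation of the two algebras via Equation \ref{eq:com1}. You simply spell out the induction (including the correct observation that $C_A(B_n)$ is always a sum of Alice's own projectors, however its index set is determined) and the commutant argument that the paper leaves implicit.
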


\begin{proof}
All $A_n$ lie in the algebra generated by $\{P_A^i\}$ and all $B_m$ lie in the algebra generated by $\{P_B^j\}$.  Since each $P_A^i$ commutes with each $P_B^j$, the two algebras commute element-wise.  Therefore, all their products commute.
\end{proof}

The next lemma ensures that all conditional probabilities in the common-certainty recursion are well defined (there are no vanishing denominators).

\begin{lemma} \label{lem:positive}
For all $n$, we have:
\begin{equation}
\operatorname{Tr}(A_n \, \rho) \ge \operatorname{Tr}(A_* \, \rho) \ge \operatorname{Tr}(C_* \, \rho) > 0,
\end{equation}
and similarly for $B_n$. 
\end{lemma}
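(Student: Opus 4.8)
The plan is to deduce the entire chain from the L\"owner (positive-semidefinite) order on projectors, using Lemmas~\ref{lem:proj-stab} and~\ref{lem:commute}, which are already established. I will rely on two elementary facts: (i) for projectors $P$ and $Q$, one has $P \le Q$ if and only if $QP = P$, equivalently $\operatorname{range}(P) \subseteq \operatorname{range}(Q)$; and (ii) if $P \ge Q \ge 0$ and $\rho \ge 0$, then $\operatorname{Tr}(P \rho) \ge \operatorname{Tr}(Q \rho)$, since $P - Q \ge 0$ and the trace of a product of two positive operators is nonnegative.

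First I would establish the monotonicity $A_{n+1} \le A_n$. By Lemma~\ref{lem:proj-stab} each $A_n$ is a projector, so $A_n A_{n+1} = A_n A_n \, C_A(B_n) = A_n \, C_A(B_n) = A_{n+1}$; taking adjoints (every operator here is self-adjoint) gives $A_{n+1} A_n = A_{n+1}$ as well, hence $A_{n+1} \le A_n$ by fact~(i). Iterating from $n$ up to the stabilization index $N$ gives $A_n \ge A_N = A_*$ for all $n$, with equality once $n \ge N$, so fact~(ii) yields $\operatorname{Tr}(A_n \rho) \ge \operatorname{Tr}(A_* \rho)$. For the middle inequality, $C_* = A_* B_*$ is a projector by Lemmas~\ref{lem:proj-stab}--\ref{lem:commute}, and $A_* C_* = A_* A_* B_* = A_* B_* = C_*$, so $C_* \le A_*$ by fact~(i) and thus $\operatorname{Tr}(A_* \rho) \ge \operatorname{Tr}(C_* \rho)$ by fact~(ii). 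The final strict inequality $\operatorname{Tr}(C_* \rho) > 0$ is precisely the standing hypothesis that $q_A$ and $q_B$ are common certainty at $\rho$ -- that is, its defining condition -- which is in force throughout the proof of the Commutative Agreement Theorem. The assertion for the $B_n$ follows by interchanging the roles of Alice and Bob, using $C_* = B_* A_*$, which is again available from Lemma~\ref{lem:commute}.

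There is no genuine obstacle here; the single point that deserves care is the logical direction of the chain. It must not be read as if positivity of the left-most traces were being assumed: on the contrary, positivity propagates backwards, since $C_* \le A_* \le A_n$ together with $\operatorname{Tr}(C_* \rho) > 0$ forces $\operatorname{Tr}(A_n \rho) > 0$ for every $n$. This is exactly what makes all the conditional probabilities entering the recursion of Equations~\ref{eq:init-a}--\ref{eq:recur-b} (via Equation~\ref{eq:born}) well-defined in the first place, so the lemma also serves to retroactively justify the non-vacuousness of the construction.
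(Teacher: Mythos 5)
Your proof is correct and follows essentially the same route as the paper: establish the Loewner-order chain $C_* \le A_* \le A_n$ (the paper via the observation that $C_*=A_*B_*$ projects onto $\operatorname{ran}(A_*)\cap\operatorname{ran}(B_*)$, you via the equivalent algebraic criterion $QP=P$), then apply trace monotonicity against $\rho \ge 0$, with the strict positivity $\operatorname{Tr}(C_*\,\rho)>0$ coming from the standing common-certainty hypothesis. Your explicit verification of $A_{n+1}\le A_n$ and your remark on the direction in which positivity propagates merely spell out details the paper leaves implicit.
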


\begin{proof}
Since $C_* = A_* B_*$ is the projector onto $\operatorname{ran}(A_*) \cap \operatorname{ran}(B_*)$, we have $0 \le C_* \le A_* \le A_n$ in the Loewner (positive semidefinite) order, and similarly $0 \le C_* \le B_* \le B_n$.  Monotonicity of the trace with respect to the operator order and $\rho \ge 0$ give the inequalities.  Strict positivity is guaranteed by Condition \ref{eq:index} in the main text.
\end{proof}

\begin{lemma} \label{lem:Qsigma}
Let $Q$ be a projector and $\sigma$ a density operator. If
$\operatorname{Tr}(Q \, \sigma) = 1$, then:
\begin{equation}
Q \, \sigma \, Q = \sigma.
\end{equation}
\end{lemma}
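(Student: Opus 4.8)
The plan is to reduce the claim to the elementary fact that a positive semidefinite operator of trace zero is the zero operator, applied to the complementary block $(\mathbb{I}-Q)\,\sigma\,(\mathbb{I}-Q)$. Write $\bar Q = \mathbb{I} - Q$, which is again a projector since $Q$ is one. Using cyclicity of the trace and $\bar Q^2 = \bar Q$, I would first compute $\operatorname{Tr}(\bar Q\,\sigma\,\bar Q) = \operatorname{Tr}(\bar Q\,\sigma) = \operatorname{Tr}(\sigma) - \operatorname{Tr}(Q\,\sigma) = 1 - 1 = 0$. Next, $\bar Q\,\sigma\,\bar Q$ is positive semidefinite: for any vector $|v\rangle$, $\langle v|\,\bar Q\,\sigma\,\bar Q\,|v\rangle = \langle w|\,\sigma\,|w\rangle \ge 0$ with $|w\rangle = \bar Q|v\rangle$, because $\sigma \ge 0$. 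A positive semidefinite operator whose trace (the sum of its nonnegative eigenvalues) vanishes has all eigenvalues zero, hence is the zero operator; therefore $\bar Q\,\sigma\,\bar Q = 0$.

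The one genuine step is passing from $\bar Q\,\sigma\,\bar Q = 0$ to $\bar Q\,\sigma = 0$, which needs a short argument rather than cancellation. I would factor $\sigma = \sqrt{\sigma}\,\sqrt{\sigma}$ with $\sqrt{\sigma} \ge 0$ self-adjoint, so that $0 = \bar Q\,\sigma\,\bar Q = (\bar Q\sqrt{\sigma})(\bar Q\sqrt{\sigma})^{\dagger}$; taking the trace shows the Hilbert–Schmidt norm of $\bar Q\sqrt{\sigma}$ is zero, hence $\bar Q\sqrt{\sigma} = 0$ and thus $\bar Q\,\sigma = \bar Q\sqrt{\sigma}\,\sqrt{\sigma} = 0$. (Equivalently, one could diagonalize $\sigma = \sum_k p_k |\psi_k\rangle\langle\psi_k|$ with $p_k \ge 0$ and observe $0 = \operatorname{Tr}(\bar Q\,\sigma) = \sum_k p_k\,\|\bar Q|\psi_k\rangle\|^2$ forces $\bar Q|\psi_k\rangle = 0$ whenever $p_k > 0$, giving $\bar Q\,\sigma = 0$.)

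Finally, $\bar Q\,\sigma = 0$ is the same as $Q\,\sigma = \sigma$; taking adjoints and using $\sigma^{\dagger} = \sigma$, $Q^{\dagger} = Q$ gives $\sigma\,Q = \sigma$ as well, so $Q\,\sigma\,Q = (Q\,\sigma)\,Q = \sigma\,Q = \sigma$, as required. I do not expect any real obstacle here: the only points to handle with care are the correct invocation of "trace-zero and positive semidefinite implies zero" and not collapsing $\bar Q\,\sigma\,\bar Q = 0$ directly to $\bar Q\,\sigma = 0$ without the square-root (or spectral) argument.
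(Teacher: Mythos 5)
Your proof is correct and is essentially the paper's argument in expanded form: the paper simply asserts that $\operatorname{Tr}(Q\,\sigma)=1$ forces $\operatorname{supp}(\sigma)\subseteq\operatorname{ran}(Q)$, and your complementary-block computation (trace-zero positive operator $\bar Q\,\sigma\,\bar Q=0$, then the square-root or spectral argument giving $\bar Q\,\sigma=0$) is precisely the standard verification of that support containment. No gaps; your care in not cancelling $\bar Q\,\sigma\,\bar Q=0$ directly to $\bar Q\,\sigma=0$ is exactly the detail the paper leaves implicit.
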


\begin{proof}
The condition $\operatorname{Tr}(Q \, \sigma)=1$ implies that $\sigma$ has support contained in $\operatorname{ran}(Q)$, from which it follows that $Q \, \sigma \, Q = \sigma$.
\end{proof}

\begin{lemma} \label{lem:invariance}
We have:
\begin{equation}
\frac{B_* A_* \, \rho \, A_* B_*}{\operatorname{Tr}(B_* A_* \, \rho)} = \frac{A_* \, \rho \, A_*}{\operatorname{Tr}(A_* \, \rho)}.
\end{equation}
\end{lemma}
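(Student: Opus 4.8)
The plan is to reduce the operator identity to the scalar identity
$\operatorname{Tr}\bigl(B_* \, A_* \rho A_*\bigr) = \operatorname{Tr}(A_* \rho)$
and then invoke Lemma~\ref{lem:Qsigma}. Concretely, set $\sigma = A_* \rho A_* / \operatorname{Tr}(A_* \rho)$, which is a well-defined density operator by Lemma~\ref{lem:positive}. The scalar identity says exactly that $\operatorname{Tr}(B_* \sigma) = 1$, so Lemma~\ref{lem:Qsigma} gives $B_* \sigma B_* = \sigma$, i.e.
$B_* A_* \rho A_* B_* / \operatorname{Tr}(A_* \rho) = A_* \rho A_* / \operatorname{Tr}(A_* \rho)$.
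To replace the denominator $\operatorname{Tr}(A_* \rho)$ on the left by $\operatorname{Tr}(B_* A_* \rho)$, I would use $[A_*, B_*] = 0$ (Lemma~\ref{lem:commute}) together with cyclicity of the trace: $\operatorname{Tr}(B_* A_* \rho A_*) = \operatorname{Tr}(A_* B_* A_* \rho) = \operatorname{Tr}(B_* A_*^2 \rho) = \operatorname{Tr}(B_* A_* \rho)$, so the scalar identity also yields $\operatorname{Tr}(B_* A_* \rho) = \operatorname{Tr}(A_* \rho)$. Combining the numerator relation with this denominator relation gives the claimed equation. Thus everything rests on the scalar identity.

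To prove $\operatorname{Tr}(B_* A_* \rho A_*) = \operatorname{Tr}(A_* \rho)$, the first move is to exploit the fixed point of the recursion. At the stabilization index $N$ one has $A_* = A_{N+1} = A_N \, C_A(B_N) = A_* \, C_A(B_*)$ from Equations~\ref{eq:recur-a}--\ref{eq:recur-b}, hence $\operatorname{ran}(A_*) \subseteq \operatorname{ran}\bigl(C_A(B_*)\bigr)$. The second move upgrades this range inclusion to an index inclusion. An easy induction on Equations~\ref{eq:recur-a}--\ref{eq:recur-b}, using that a product of two sub-sums of the orthogonal family $\{P_A^k\}$ is the sub-sum over the intersection of their index sets, shows that each $A_n$, and hence $A_*$, is of the form $\sum_{k \in S} P_A^k$ with the $P_A^k$ ($k \in S$) nonzero and mutually orthogonal. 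Since $C_A(B_*) = Q_A(B_*; 1) = \sum_{k \in K_A(B_*; 1)} P_A^k$ is a sub-sum of the same family (Equations~\ref{eq:index}--\ref{eq:proj_sum}), the inclusion $\operatorname{ran}(P_A^k) \subseteq \operatorname{ran}(C_A(B_*))$ for $k \in S$ forces $S \subseteq K_A(B_*; 1)$. In particular, every $k \in S$ satisfies $\operatorname{Tr}(P_A^k \rho) > 0$ and $\operatorname{Prob}[B_*; P_A^k] = 1$, i.e. $\operatorname{Tr}\bigl(B_* \, P_A^k \rho P_A^k\bigr) = \operatorname{Tr}(P_A^k \rho)$.

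The third move sums these per-outcome certainties into the coarse-grained one, and this is where commutativity earns its keep. Because $B_*$ lies in the algebra generated by $\{P_B^j\}$, we have $[B_*, P_A^k] = 0$ for every $k$, so by cyclicity of the trace $\operatorname{Tr}(B_* P_A^k \rho P_A^l) = \operatorname{Tr}(B_* P_A^l P_A^k \rho) = \delta_{kl} \operatorname{Tr}(B_* P_A^k \rho)$; the cross terms in $A_* \rho A_* = \sum_{k,l \in S} P_A^k \rho P_A^l$ therefore drop out. Hence $\operatorname{Tr}(B_* A_* \rho A_*) = \sum_{k \in S} \operatorname{Tr}(B_* P_A^k \rho) = \sum_{k \in S} \operatorname{Tr}(B_* P_A^k \rho P_A^k) = \sum_{k \in S} \operatorname{Tr}(P_A^k \rho) = \operatorname{Tr}(A_* \rho)$, where the second equality again uses $[B_*, P_A^k] = 0$ and the third uses the index inclusion from the previous step. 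This is the scalar identity, and the lemma follows.

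I expect the only real obstacle to be the bookkeeping in the second paragraph: making precise that stabilization of the recursion delivers $A_* = A_* C_A(B_*)$, and that $A_*$ is a genuine sub-sum of the PVM $\{P_A^k\}$ so that range inclusion sharpens to index-set inclusion $S \subseteq K_A(B_*; 1)$. Once that structural fact is secured, the remainder is elementary trace manipulation with commuting projectors, with all divisions legitimized by Lemma~\ref{lem:positive}; a minor point to watch is discarding any zero projectors $P_A^k$ from $S$ before arguing the index inclusion.
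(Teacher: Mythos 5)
Your proposal is correct and follows essentially the same route as the paper: both rest on writing $A_*$ as a sub-sum $\sum_{k\in S}P_A^k$ of Alice's projectors with $S\subseteq K_A(B_*;1)$ (so each branch is certain of $B_*$), and then combine Lemma \ref{lem:Qsigma} with the commutativity of $B_*$ and the $P_A^k$ to pass to the coarse-grained state. The only difference is one of ordering and explicitness -- the paper applies Lemma \ref{lem:Qsigma} branch-by-branch and sums the operator identities, while you aggregate at the trace level and apply it once to $\sigma$, spelling out the stabilization fact $A_*=A_*\,C_A(B_*)$, the vanishing cross terms, and the equality of the two denominators, all of which the paper leaves implicit.
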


\begin{proof}
Write:
\begin{equation}
A_* = \sum_{i \in L} P_A^i,
\end{equation}
where $L$ indexes those outcomes satisfying
$\operatorname{Pr}[B_*; P_A^i] = 1$.  For any such $i$, the conditional state $\rho_i = P_A^i \, \rho \, P_A^i / \operatorname{Tr}(P_A^i \, \rho)$ satisfies
$\operatorname{Tr}(B_* \, \rho_i) = 1$.  By Lemma \ref{lem:Qsigma}:
\begin{equation}
B_* \, \rho_i\, B_* = \rho_i,
\end{equation}
or, equivalently:
\begin{equation}
B_* (P_A^i \, \rho \, P_A^i) B_* = P_A^i \, \rho \, P_A^i.
\end{equation}
Summing over $i \in L$ and using commutativity of all projectors gives:
\begin{equation}
B_* (A_* \, \rho \, A_*) B_* = A_* \, \rho \, A_*.
\end{equation}
Normalizing both sides yields the required equality.
\end{proof}

\begin{lemma}
We have:
\begin{equation}
\operatorname{Tr}\Bigl(P_E \frac{A_* \, \rho \, A_*}{\operatorname{Tr}(A_* \, \rho)}\Bigr) = \operatorname{Tr}\Bigl(P_E \frac{B_* A_* \, \rho \, A_* B_*}{\operatorname{Tr}(B_* A_* \, \rho)}\Bigr).
\end{equation}
\end{lemma}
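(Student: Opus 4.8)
The plan is to observe that this statement is an immediate corollary of Lemma~\ref{lem:invariance}. That lemma asserts the operator identity $\frac{B_* A_* \, \rho \, A_* B_*}{\operatorname{Tr}(B_* A_* \, \rho)} = \frac{A_* \, \rho \, A_*}{\operatorname{Tr}(A_* \, \rho)}$, so the two density operators appearing inside the traces in the present statement are literally the same matrix. First I would invoke Lemma~\ref{lem:positive} to note that both normalizing denominators $\operatorname{Tr}(A_* \, \rho)$ and $\operatorname{Tr}(B_* A_* \, \rho)$ are strictly positive, so that both expressions in the claim are well defined and the equality of normalized states supplied by Lemma~\ref{lem:invariance} makes sense. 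Then I would left-multiply that operator identity by the fixed projector $P_E$ and take the trace of both sides: equal operators have equal traces after multiplication by any fixed operator, so the two sides of the claimed equation coincide.

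There is essentially no obstacle here, since all the substantive work has already been done in establishing Lemma~\ref{lem:invariance} (with the no-vanishing-denominator guarantee coming from Lemma~\ref{lem:positive}). The only point requiring a little care is the bookkeeping of which trace normalization is paired with which numerator, and recording that $\operatorname{Tr}(B_* A_* \, \rho) = \operatorname{Tr}(A_* B_* \, \rho) = \operatorname{Tr}(C_* \, \rho)$ via $[A_*, B_*] = 0$ from Lemma~\ref{lem:commute}, so that the denominators displayed on the two sides are exactly the quantities claimed. Once this is noted, the asserted identity is a one-line consequence of applying $\operatorname{Tr}(P_E \, \cdot)$ to the equality of Lemma~\ref{lem:invariance}.
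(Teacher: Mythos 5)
Your proposal is correct and matches the paper's own argument: the paper also proves this lemma simply by applying the linear functional $T \mapsto \operatorname{Tr}(P_E T)$ to both sides of Lemma \ref{lem:invariance}. Your extra remarks about Lemma \ref{lem:positive} guaranteeing nonzero denominators and $[A_*,B_*]=0$ are harmless additional bookkeeping, not a different route.
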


\begin{proof}
Apply the linear functional $T \mapsto \operatorname{Tr}(P_E T)$ to both sides of Lemma \ref{lem:invariance}.
\end{proof}

\begin{lemma}
For any $i \in L$ with $\operatorname{Tr}(P_A^i \, \rho) > 0$:
\begin{equation}
\operatorname{Pr}[E; P_A^i] = \operatorname{Tr}\Bigl(
P_E \frac{A_* \, \rho \, A_*}{\operatorname{Tr}(A_* \, \rho)} \Bigr).
\end{equation}
\end{lemma}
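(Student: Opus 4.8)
The plan is to collapse both sides of the asserted identity to the single number $q_A$, using only the commutation $[P_A^i,P_E]=0$ of Equation \ref{eq:com2} and the fact that $A_*$ is a coarse-graining of Alice outcomes on which her estimate of $E$ is uniformly $q_A$.

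First I would extract the structural input. Writing $A_* = \sum_{i\in L} P_A^i$ as in the proof of Lemma \ref{lem:invariance}, Lemma \ref{lem:positive} gives $A_* \le A_0$ in the Loewner order; since $A_0 = Q_A(E;q_A) = \sum_{k\in K_A(E;q_A)} P_A^k$ and both $A_*$ and $A_0$ are sums of projectors from the single orthogonal family $\{P_A^k\}$, domination of the sums is equivalent to inclusion of index sets, so $L \subseteq K_A(E;q_A)$. Hence every $i\in L$ with $\operatorname{Tr}(P_A^i\,\rho)>0$ has $\operatorname{Pr}[E;P_A^i]=q_A$. Next I would simplify $\operatorname{Pr}[E;P_A^i]$ itself: by cyclicity of the trace, $[P_A^i,P_E]=0$, and $(P_A^i)^2=P_A^i$, one has $\operatorname{Tr}(P_E\,P_A^i\,\rho\,P_A^i) = \operatorname{Tr}(P_A^i P_E P_A^i\,\rho) = \operatorname{Tr}(P_E\,P_A^i\,\rho)$, so $\operatorname{Tr}(P_E\,P_A^i\,\rho) = q_A\,\operatorname{Tr}(P_A^i\,\rho)$ for each $i\in L$.

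Then I would expand the right-hand side. Since $A_*$ is a sum of the $P_A^i$, it also commutes with $P_E$, and the same cyclic manipulation gives $\operatorname{Tr}(P_E\,A_*\,\rho\,A_*) = \operatorname{Tr}(P_E\,A_*\,\rho) = \sum_{i\in L}\operatorname{Tr}(P_E\,P_A^i\,\rho)$, while $\operatorname{Tr}(A_*\,\rho) = \sum_{i\in L}\operatorname{Tr}(P_A^i\,\rho)$, which is strictly positive by Lemma \ref{lem:positive}. Substituting $\operatorname{Tr}(P_E\,P_A^i\,\rho) = q_A\,\operatorname{Tr}(P_A^i\,\rho)$ term by term, the right-hand side equals $q_A\,\sum_{i\in L}\operatorname{Tr}(P_A^i\,\rho)\big/\sum_{i\in L}\operatorname{Tr}(P_A^i\,\rho) = q_A$, which is exactly $\operatorname{Pr}[E;P_A^i]$.

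I do not expect a genuine obstacle here: this is a commuting coarse-graining computation. The only points needing care are the direction of the inclusion $L\subseteq K_A(E;q_A)$ (which relies on mutual orthogonality of the $P_A^i$, so that Loewner domination of projector sums forces index-set inclusion) and the legitimacy of moving $P_A^i$ and $A_*$ cyclically past $P_E$ inside the trace, which uses nothing beyond Equation \ref{eq:com2}. The corresponding statement for Bob follows verbatim with $A$ and $B$ interchanged.
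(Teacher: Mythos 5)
Your proposal is correct and follows essentially the same route as the paper's proof: both use $[P_A^i,P_E]=0$ together with the mutual orthogonality of Alice's projectors to show $\operatorname{Tr}(P_E A_*\,\rho\,A_*)=\sum_{i\in L} q_A\operatorname{Tr}(P_A^i\,\rho)=q_A\operatorname{Tr}(A_*\,\rho)$ and then divide. The only difference is cosmetic: the paper asserts $\operatorname{Pr}[E;P_A^i]=q_A$ for $i\in L$ ``by construction,'' whereas you justify it via the inclusion $L\subseteq K_A(E;q_A)$ from the Loewner domination $A_*\le A_0$, which is a fair (and slightly more explicit) way of making the same point.
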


\begin{proof}
For each $i\in L$, we have $\operatorname{Pr}[E; P_A^i] = q_A$ by construction.  Because $P_E$ commutes with all $P_A^i$ and these projectors are mutually orthogonal:
\begin{equation}
\operatorname{Tr}(P_E A_* \, \rho \, A_*) = \sum_{i \in L} \operatorname{Tr}(P_E P_A^i \, \rho \, P_A^i) = \sum_{i \in L} q_A \cdot \operatorname{Tr}(P_A^i \, \rho) = q_A \cdot \operatorname{Tr}(A_* \, \rho).
\end{equation}
Dividing both sides by $\operatorname{Tr}(A_* \, \rho)$ yields the desired equality.
\end{proof}

The proof of the Commutative Agreement Theorem can now be finished just as in the main text.

We note that, in the setting of Equations \ref{eq:com1}--\ref{eq:com2} of the main text, a proof of the Commutative Agreement Theorem could be obtained via the Gelfand representation theorem (Gelfand and Naimark, 1943).  This would involve some bookkeeping, and our operator proof works transparently within the Hilbert-space formalism.  Our direct proof also sets the stage for our subsequent commuting and non-commuting examples, and for the further results in the paper.

\renewcommand{\thesection}{S.3}

\section{Additional Scenarios} 

\begin{figure}[h] 
\centering \includegraphics[scale=0.6]{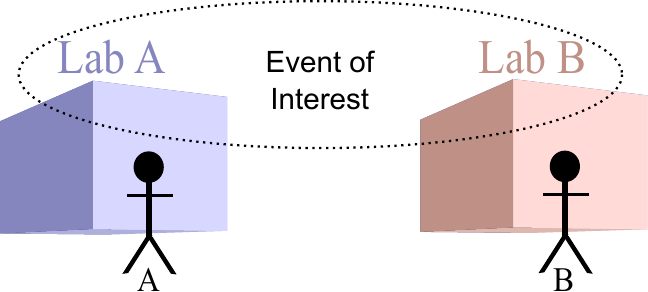}
\caption{Two-laboratory scenario}
\label{fig:2part}
\end{figure}

For our first variant on the three-lab scenario in the main text, we suppose that Alice and Bob have their own labs (there is no third lab) and perform local measurements.  The property of interest is modeled by a global projector $P_E \in {\cal B}({\cal H}_A \otimes {\cal H}_B)$.  See Figure \ref{fig:2part}.  In this scenario, the commutativity condition of Equation \eqref{eq:com1} continues to hold, but Equation \eqref{eq:com2} need not.  The Agreement Theorem therefore holds in this setting whenever it makes sense to assume Equation \eqref{eq:com2} outright.

\begin{figure}[h] 
  \centering \includegraphics[scale=0.6]{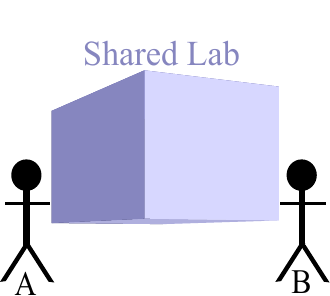}
  \caption{Single-laboratory scenario}
  \label{fig:1part}
\end{figure}

For the second variant, we suppose that Alice and Bob share a lab in which there is a quantum system. Each performs a measurement, and they each estimate the probability of a property of interest of the system.  See Figure \ref{fig:1part}.  In this scenario, there is no reason to favor either Equation \eqref{eq:com1} or Equation \eqref{eq:com2}.  An Agreement Theorem follows if both conditions are assumed outright.

The Agreement Theorem also holds in any scenario where the following conditions obtain:
\begin{equation}
[P_A^i, \, \rho] = [P_B^j, \, \rho] = 0 \,\, \text{for all} \,\, i,j \,\, \text{and} \,\, [A_*, B_*] = 0.
\end{equation}

\renewcommand{\thesection}{S.4}

\section{Proof of the 0\,-1 Impossibility Theorem}
We state the equivalent Born-L\"uders update version of the statement of the theorem in the main text.  Specifically, there is no state $\rho$ and outcome projector $P_A^i$ for Alice such that:
\begin{equation}
\operatorname{Tr}(P_A^i \, \rho) > 0, \,\,\,\, \operatorname{Pr}[P_E; P_A^i] = 1, \,\,\,\, \operatorname{Pr}[Q_B(E; 0); P_A^i] = 1
\end{equation}

Assume for contradiction that such a $\rho$ and $P_A^i$ exist.  We first show that Alice's post-measurement state lies in the $E$-subspace.  Let $S = \operatorname{supp}(\rho)$ and let $P_S$ denote the projector onto $S$.  Let Alice's updated state $\sigma$ on outcome $i$ be given by:
\begin{equation}
\sigma = \frac{P_A^i \, \rho \, P_A^i}{\operatorname{Tr}(P_A^i \, \rho)}. 
\end{equation}
Let $K = \operatorname{supp}(\sigma)$.  Since $\rho$ vanishes on $S^\perp$, we have $K \subseteq S$.  The assumption $\operatorname{Pr}[P_E; P_A^i] = 1$ is equivalent to:
\begin{equation}
\operatorname{Tr}\bigl((\mathbb{I} - P_E)\sigma\bigr) = 0,
\end{equation}
from which $v = P_E v$ for all $v \in K$.  Therefore:
\begin{equation} \label{eq:K-ran}
K \subseteq S \cap \operatorname{ran}(P_E).
\end{equation}

Next, let Bob's updated state $\rho_B^j$ on outcome $j$ be given by:
\begin{equation}
\rho_B^j = \frac{P_B^j \, \rho \, P_B^j}{\operatorname{Tr}(P_B^j \, \rho)} \,\, \text{where} \,\, \operatorname{Tr}(P_B^j \, \rho) > 0.
\end{equation}
Let:
\begin{equation}
J_0 = \{ j : \operatorname{Pr}[E ; P_B^j] = 0 \,\, \text{and} \,\, \operatorname{Tr}(P_B^j \, \rho) > 0 \},
\end{equation}
and define Bob's ``certain not-$E$'' projector (paralleling the main text):
\begin{equation}
Q_B(E; 0) = \sum_{j \in J_0} P_B^j.
\end{equation}

We now show that vectors in $K$ are killed by each $P_B^j$ with $j \in J_0$.  Fix $j \in J_0$.  Since $\rho_B^j$ assigns probability $0$ to $E$, we have:
\begin{equation}
P_E \, \rho_B^j \, P_E = 0,
\end{equation}
which implies:
\begin{equation}
T_j = P_E P_B^j \, \rho \, P_B^j P_E = 0.
\end{equation}
Factor $\rho = \rho^{1/2}\rho^{1/2}$ to write:
\begin{equation}
T_j = (\rho^{1/2} P_B^j P_E)^\dagger (\rho^{1/2} P_B^j P_E),
\end{equation}
from which $T_j = 0$ implies:
\begin{equation} \label{eq:half}
\rho^{1/2} P_B^j P_E = 0.
\end{equation}
Because $\rho^{1/2}$ is invertible on $S = \operatorname{supp}(\rho)$, we obtain:
\begin{equation}
P_S P_B^j P_E P_S = 0,
\end{equation}
and therefore, for all $v \in S$:
\begin{equation} \label{eq:ppv}
P_B^j P_E v \in S^\perp.
\end{equation}

Now let $v \in K$.  Then $v \in S$ and, by Equation \ref{eq:K-ran}, we get $v = P_E v$.  Applying Equation \ref{eq:ppv}, we obtain:
\begin{equation}
P_B^j v \in S^\perp.
\end{equation}
Since $K\subseteq S$, this means the component of $P_B^j v$ in $K$ vanishes:
\begin{equation}
P_K P_B^j P_K = 0.
\end{equation}
Summing over $j \in J_0$:
\begin{equation} \label{eq:annh}
P_K Q_B(E; 0) P_K = 0.
\end{equation}

To arrive at the contradiction, note that the assumption $\operatorname{Pr}[Q_B(E; 0); P_A^i] = 1$ is equivalent to:
\begin{equation}
\operatorname{Tr}(Q_B(E; 0) \, \sigma) = 1,
\end{equation}
and, since $Q_B(E; 0)$ is a projector, Lemma \ref{lem:Qsigma} implies:
\begin{equation}
Q_B(E; 0)\, \sigma \, Q_B(E; 0) = \sigma.
\end{equation}
Therefore, $Q_B(E; 0)$ acts as the identity on $K = \operatorname{supp}(\sigma)$:
\begin{equation} \label{eq:unity}
Q_B(E; 0) \big|_K = \mathbb{I}_K.
\end{equation}
But by Equation \ref{eq:annh}:
\begin{equation}
Q_B(E; 0)\big|_K  = \sum_{j \in J_0} P_K P_B^j P_K = 0,
\end{equation}
which contradicts Equation \ref{eq:unity}, since $K \neq \{0\}$.

We next present a no-signaling box (Popescu and Rohlich, 1994) that exhibits 0-1 disagreement in the natural way that the epistemics are defined in this setting.  (The example builds on Contreras-Tejada et al., 2021.)  In Table \ref{tab:nosignaling-box}, there are three possible measurements $a, b, e$, each with outcome $0$ or $1$.  The compatible measurements are the pairs $(a, b)$, $(a, e)$, and $(b, e)$.  The probabilities of each of the sets of four joint outcomes are given in each row.  No signaling is satisfied.  For example, if the measurement pair $(a, b)$ is undertaken, the marginal probability that $b$ yields $1$ is $1/4 + 1/4 = 1/2$.  If the measurement pair $(b, e)$ is undertaken, this marginal is $1/2 + 0 = 1/2$.  The other no-signaling equalities can be checked similarly.

\begin{table}[h]
\centering
\renewcommand{\arraystretch}{1.4}
\begin{tabular}{c|cccc}
& $(0,0)$ & $(1,0)$ & $(0,1)$ & $(1,1)$ \\
\hline
$(a,b)$ & $\tfrac{1}{2}$ & $0$ & $\tfrac{1}{4}$ & $\tfrac{1}{4}$ \\
$(a,e)$ & $\tfrac{1}{2}$ & $0$ & $\tfrac{1}{4}$ & $\tfrac{1}{4}$ \\
$(b,e)$ & $0$ & $\tfrac{1}{2}$ & $\tfrac{1}{2}$ & $0$ \\
\end{tabular}
\caption{No-signaling box exhibiting 0\,-1 disagreement}
\label{tab:nosignaling-box}
\end{table}

Now, let the event of interest be $e = 1$ and suppose Alice obtains the outcome $a = 1$.  Then, from the second row:
\begin{equation}
\operatorname{Pr}[e = 1 | a = 1] = \frac{1/4}{0 + 1/4} = 1.
\end{equation}
From the third row:
\begin{equation}
\operatorname{Pr}[e = 1 | b = 0] = \frac{1/2}{0 + 1/2} = 1, \,\,\,\, \operatorname{Pr}[e = 0 | b = 1] = \frac{1/2}{1/2 + 0} = 1.
\end{equation}
Finally, from the first row:
\begin{equation}
\operatorname{Pr}[b = 1 | a = 1] = \frac{1/4}{0 + 1/4} = 1.
\end{equation}
We conclude that if Alice measures $a$ and gets the outcome $1$, she is certain that $e = 1$.  She is also certain that $b = 1$.  From this, she is certain that Bob is certain that $e = 0$.  We obtain $0\operatorname{-}1$ disagreement.

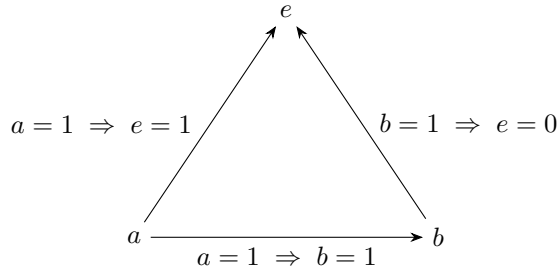
\begin{figure}[h]
\centering
\begin{tikzpicture}[>=Stealth, node distance=2cm]
  \node (A) at (0,0) {$a$};
  \node (B) at (4,0) {$b$};
  \node (E) at (2,3) {$e$};

  \draw[->] (A) -- node[below] {$a=1 \;\Rightarrow\; b=1$} (B);
  \draw[->] (B) -- node[right] {$\,\, b=1 \;\Rightarrow\; e=0$} (E);
  \draw[->] (A) -- node[left]  {$a=1 \;\Rightarrow\; e=1 \,\,$} (E);

\end{tikzpicture}
\caption{Implication structure}
\label{fig:specker}
\end{figure}

Interestingly, as Figure \ref{fig:specker} indicates, the structure of this no-signaling example is exactly that of the Specker triangle (Specker, 1960).  This connection between contextuality and $0\operatorname{-}1$ disagreement would seem to deserve further study.

We next recast this example in quasi-classical terms using signed probabilities on phase space (following Abramsky and Brandenburger, 2011, Theorem 5.4, and Brandenburger et al., 2024).  Though an equivalent setting, the epistemic analysis may be more transparent this way.  Let $X =\{a, b, e\}$ be the set of elementary measurements and $O =\{0, 1\}$ the set of outcomes.  The phase space is the set of global assignments $\Omega = O^X$ and the signed probability measure $\lambda$ on $\Omega$ is depicted in Table \ref{tab:signed} below.

\begin{table}[h!]
\centering
\renewcommand{\arraystretch}{1.4}
\begin{tabular}{c|ccc|c}
\hline
$\omega_i$ & $a$ & $b$ & $e$ & $\lambda$ \\
\hline
$\omega_0$ & 0 & 0 & 0 & $+\tfrac{3}{16}$ \\
$\omega_1$ & 0 & 0 & 1 & $+\tfrac{5}{16}$ \\
$\omega_2$ & 0 & 1 & 0 & $+\tfrac{5}{16}$ \\
$\omega_3$ & 0 & 1 & 1 & $-\tfrac{1}{16}$ \\
$\omega_4$ & 1 & 0 & 0 & $-\tfrac{3}{16}$ \\
$\omega_5$ & 1 & 0 & 1 & $+\tfrac{3}{16}$ \\
$\omega_6$ & 1 & 1 & 0 & $+\tfrac{3}{16}$ \\
$\omega_7$ & 1 & 1 & 1 & $+\tfrac{1}{16}$ \\
\hline
\end{tabular}
\caption{Signed phase-space probability measure}
\label{tab:signed}
\end{table}

It can be checked that this table realizes the no-signaling box of Table \ref{tab:nosignaling-box}.  Also, we give Alice the partition $\{ \{\omega : a(\omega) = 0\}, \{\omega : a(\omega) = 1\}\}$ and Bob the partition $\{ \{\omega : b(\omega) = 0\}, \{\omega : b(\omega) = 1\}\}$.  The event of interest is $E = \{\omega : e(\omega) = 1\}$.  At the state $\omega_7$, Alice assigns conditional probability $1$ to $E$.  The event that Bob assigns conditional probability $0$ to $E$ is $\{\omega : b(\omega) = 1\}$.  At $\omega_7$, Alice assigns conditional probability $1$ to $\{\omega : b(\omega) = 1\}$.  This establishes $0\operatorname{-}1$ disagreement in the phase-space setting.

\renewcommand{\thesection}{S.5}

\section{Proofs of Bounds on Disagreement}
To prove the quantum Inequality \ref{eq:bound-a} in the main text, we build on the classical argument in Hellman (2013).  To begin, assume $c_A = \operatorname{Tr}(C_* \, \rho_A) > 0$ and $c_B = \operatorname{Tr}(C_* \, \rho_B) > 0$.  Also, write $e_A = \operatorname{Tr}(P_E C_* \, \rho_A)$ and $e_B = \operatorname{Tr}(P_E C_* \, \rho_B)$, so that $q_A = e_A/c_A$, $q_B = e_B/c_B$, and:
\begin{equation}
|q_A - q_B| = \frac{|c_B e_A - c_A e_B|}{c_A c_B}.
\end{equation}

Next, set $x = c_Be_A$, $y = c_Ae_B$, and $z = c_Be_B$, and use the triangle inequality:
\begin{equation}
|x - y| \le |x - z| + |z - y|,
\end{equation}
to obtain:
\begin{equation} \label{eq:holder}
|q_A - q_B| \le  \frac{|c_B e_A - c_B e_B|}{c_A c_B} + \frac{|c_B e_B - c_A e_B|}{c_A c_B}.
\end{equation}
Rewrite the first numerator in Equation \ref{eq:holder} as:
\begin{equation}
|c_B e_A - c_B e_B| = c_B \,|e_A - e_B| = c_B \, \bigl|\operatorname{Tr}(P_E C^*(\rho_A - \rho_B))\bigr|.
\end{equation}
Now use the operator analog to the classical H\"older's inequality:
\begin{equation}
\bigl|\operatorname{Tr}(X^\dagger Y)\bigr| \le \|X\|_1 \, \|Y\|_\infty,
\end{equation}
and the fact that $\|P_E C_*\|_\infty = 1$ to write:
\begin{equation}
|e_A - e_B| \le \|\rho_A - \rho_B\|_1,
\end{equation}
from which:
\begin{equation} \label{eq:num1}
|c_B e_A - c_B e_B| \le c_B \, \|\rho_A - \rho_B\|_1.
\end{equation}

Turning to the second numerator in Equation \ref{eq:holder}, write:
\begin{equation}
|c_B e_B - c_A e_B| = e_B \, |c_B - c_A| = e_B \, \bigl|\operatorname{Tr}(C^*(\rho_A - \rho_B))\bigr|.
\end{equation}
A second application of H\"older's inequality together with $\|C_*\|_\infty = 1$ yields:
\begin{equation}
|c_B - c_A| \le \|\rho_A - \rho_B\|_1.
\end{equation}
Also, from $P_E \le \mathbb{I}$ we obtain:
\begin{equation}
e_B = \operatorname{Tr}(P_E C_* \, \rho_B) \le \operatorname{Tr}(C_* \, \rho_B) = c_B,
\end{equation}
so that:
\begin{equation} \label{eq:num2}
|c_B e_B - c_A e_B| \le c_B \, \|\rho_A - \rho_B\|_1.
\end{equation}

Substituting Inequalities \ref{eq:num1} and \ref{eq:num2} into Inequality \ref{eq:holder}, we find:
\begin{equation} \label{eq:final1}
|q_A - q_B| \le \frac{c_B \, \|\rho_A - \rho_B\|_1 + c_B \, \|\rho_A - \rho_B\|_1}{c_A c_B} = \frac{2\, \|\rho_A - \rho_B\|_1 }{c_A}.
\end{equation}
Rerunning the argument with $A$ and $B$ interchanged gives:
\begin{equation} \label{eq:final2}
|q_A - q_B| \le  \frac{2\, \|\rho_A - \rho_B\|_1 }{c_B}.
\end{equation}
Putting Inequalities \ref{eq:final1}--\ref{eq:final2} together yields Inequality \ref{eq:bound-a}.

To prove the quantum Inequality \ref{eq:bound-b} in the main text, we adapt the classical argument in Monderer and Samet (1989).  Using Inequality \ref{eq:born-epsilon}, we can retrace the steps in Section S.2 to obtain:
\begin{equation} \label{eq:epsilon}
\operatorname{Tr}(B_* A_* \, \rho) \ge (1 - \epsilon) \operatorname{Tr}(A_* \, \rho).
\end{equation}
Now, decompose $A_*$ as $A_* = C_* + A_*(\mathbb{I} - B_*)$, from which:
\begin{equation} \label{eq:total}
\operatorname{Tr}(P_E A_* \, \rho) = \operatorname{Tr}(P_E C_* \, \rho) + \operatorname{Tr}\bigl(P_E A_*(\mathbb{I} - B_*) \, \rho\bigr).
\end{equation}
Letting:
\begin{equation}
x = \operatorname{Pr}[P_E; C_*], \,\,\,\, y = \operatorname{Pr}[P_E; A_*(\mathbb{I} - B_*)], \,\,\,\, \alpha = \operatorname{Pr}[C_*; A_*],
\end{equation}
we can use Equation \ref{eq:total} to write:
\begin{equation} \label{eq:exact}
q_A = x \alpha + y (1 - \alpha).
\end{equation}
Inequality \ref{eq:epsilon} tells us that $\alpha \ge 1 - \epsilon$, from which Equation \ref{eq:exact} yields the estimate:
\begin{equation}
q_A \ge x(1 - \epsilon) + 0,
\end{equation}
while, at the same time, we can also estimate:
\begin{equation}
q_A \le x\cdot1 + 1\cdot\epsilon.
\end{equation}
Since $x$ is symmetric in $A$ and $B$, we can repeat these bounds for $q_B$.  It follows that:
\begin{equation}
|q_A - q_B | \le x + \epsilon - x(1 - \epsilon),
\end{equation}
from which, since $x \le 1$, Inequality \ref{eq:bound-b} follows.

\end{document}